\documentclass[letterpaper, 10 pt, conference]{ieeeconf}  

\IEEEoverridecommandlockouts                              

\makeatletter
\def\endthebibliography{%
	\def\@noitemerr{\@latex@warning{Empty `thebibliography' environment}}%
	\endlist
}
\makeatother

\usepackage{mathrsfs} 

\usepackage{hyperref} 
\usepackage{graphicx,color}      
\usepackage{float}
\usepackage{cite}
\usepackage{booktabs}
\usepackage{amsmath, amssymb}
\usepackage{mathtools}

\DeclareMathAlphabet{\CMcal}{OMS}{cmsy}{m}{n} 

\newcommand\R{\ensuremath{\mathbb{R}}}

\newcommand\Mb{\ensuremath{\mathbf{M}}}

\newcommand\Ib{\ensuremath{\mathbf{I}}}

\newcommand\xb{\ensuremath{\mathbf{x}}}
\newcommand\wb{\ensuremath{\mathbf{w}}}
\newcommand\ub{\ensuremath{\mathbf{u}}}

\newcommand\yb{\ensuremath{\mathbf{y}}}

\newcommand\zb{\ensuremath{\mathbf{z}}}

\newcommand\argmin{\ensuremath{\operatorname*{argmin}}}

\let\epsilon\varepsilon

\newtheorem{theorem}{Theorem}
\newtheorem{lemma}{Lemma}
\newtheorem{definition}{Definition}
\newtheorem{proposition}{Proposition}

\newtheorem{corollary}{Corollary}

\title{\LARGE \bf
Perron-Frobenius Contractive Operator Matching for Data-Driven Reachable Fault Identification and Recovery
}

\author{Joshua D. Ibrahim, Mahdi Taheri, Soon-Jo Chung, Fred Y. Hadaegh
\thanks{Ph.D. Student, Control and Dynamical Systems, California Institute of Technology, Pasadena, CA. }%
\thanks{Postdoctoral Scholar, Computing and Mathematical Sciences, California Institute of Technology, Pasadena, CA.} 
\thanks{Bren Professor of Control and Dynamical Systems, California Institute of Technology, Pasadena, CA.}
\thanks{Research Professor in Aerospace, California Institute of Technology, Pasadena, CA.}
\thanks{Emails: \{\href{mailto:jdibrahi@caltech.edu}{jdibrahi}, \href{mailto:mtaheri@caltech.edu}{mtaheri},
 \href{mailto:sjchung@caltech.edu}{sjchung}, \href{mailto:hadaegh@caltech.edu}{hadaegh}\}@caltech.edu.}
\thanks{This research is funded in part by the DARPA Learning Introspective Control (LINC) program.}}
\renewcommand{\baselinestretch}{0.948}
\begin{document}

\maketitle
\thispagestyle{empty}
\pagestyle{empty}

\begin{abstract}
This paper focuses on data-driven fault detection, identification, and recovery (FDIR) for nonlinear control-affine systems under actuator faults. We create a unified framework in the space of probability densities, rather than on individual trajectories, using fault-indexed Perron--Frobenius (PF) operators to predict the evolution of state distributions under different fault profiles. By leveraging the probability-flow representation of the Fokker--Planck equation, we construct deterministic PF operators that reproduce exact stochastic marginals, define forward reachable density families, and establish certifiable 2-Wasserstein bounds on the divergence between fault-driven and nominal density evolutions. These provide quantitative conditions for the detectability and identifiability of various faults. The fault-indexed operators are learned from trajectory data via flow map matching (FMM), and we demonstrate that the observable FMM residual directly bounds the approximation error of the operator in the 2-Wasserstein metric. Additionally, we co-train a contraction certificate that bounds the gap between the learned operator family, the actual fault-driven density flow, and the nominal dynamics. The operator library is then used online for continuous fault parameter fitting over a continuous parameter space to generalize the learned operators to out-of-distribution (OOD) scenarios. To carry out the recovery control, we employ reachable density propagation and Gaussian mixture covariance steering. The proposed framework is validated on a 10-state spacecraft attitude-control system with four reaction wheels.

\end{abstract}
Autonomous systems operating in safety-critical environments, including spacecraft, uncrewed aerial vehicles (UAVs), and robotic platforms, must maintain reliable performance under unforeseen actuator and sensor faults~\cite{isermann2005fault,ragan2024online}. Faults that go undetected or unmitigated can lead to catastrophic failures, motivating integrated fault detection, identification, and recovery (FDIR) control frameworks suitable for real-time deployment~\cite{isermann2005fault,10502204}. While the problems of fault detection and identification (FDI) and fault-tolerant control (FTC) have each received significant attention independently, a unified framework that reasons about fault signatures and recovery control in a common mathematical language remains largely lacking in the literature.

Classical FDI approaches rely on residual generation using model-based estimators and parity equations and have been widely adopted in aerospace and industrial systems due to their interpretability and real-time feasibility~\cite{venkatasubramanian2003review,chen2012robust}. However, these methods assume accurate known system models and degrade in the presence of nonlinearities, simultaneous actuator and sensor faults, and stochastic uncertainties~\cite{talebi2008recurrent,fu2018evaluation}. On the fault recovery side, adaptive control methods offer strong stability guarantees when the fault structure is known a priori, but their effectiveness depends critically on accurate system modeling and knowledge of fault characteristics~\cite{isermann2005fault,10502204}. Data-driven controllers based on deep neural networks provide a promising alternative by learning complex system models directly from data~\cite{bakhtiaridoust2023data}. Yet, their performance can decline under distributional shift and out-of-distribution (OOD) scenarios at runtime~\cite{ibrahim2026probabilisticfaultdetection,farid2022task}. Density-based reachability~\cite{meng2022learning} learns reachable state distributions but does not address fault identification or provide such certificates.

A key limitation of many existing FDI and recovery approaches is that they operate primarily on individual trajectories rather than on the evolution of state distributions. At the distribution level, deterministic dynamics transport densities through the Liouville equation, whereas stochastic dynamics evolve densities according to the Fokker--Planck equation~\cite{brockett2007optimal,risken1989fokker}. Under this viewpoint, distinct fault profiles induce distinct density flows on the state space, and these flows may be compared using transport-based discrepancies such as the \(2\)-Wasserstein distance~\cite{benamou2000computational,villani2021topics}. Reasoning directly in density space naturally incorporates uncertainty, provides distribution-level fault signatures, and supports principled formulations of uncertainty-aware reachability, safety analysis, and recovery via Perron--Frobenius (PF) operator methods~\cite{jafarpour2024probabilistic,meng2022learning}.

In this work, we operate in the space of probability densities rather than on individual trajectories. The key idea is that each fault profile drives the state distribution along a distinct density flow, and these flows can be predicted, compared, and corrected using fault-indexed PF operators. We first construct deterministic PF operators that reproduce the stochastic density evolution by leveraging the probability-flow representation of the Fokker--Planck equation. We then learn these operators from offline trajectory data using flow map matching and co-train a contraction certificate to ensure stability over long prediction horizons. The learned operator library is deployed online to identify faults, including faults outside the training library, i.e., OOD scenarios, via continuous parameter fitting. Moreover, the learned operator library is utilized for recovery control, where the identified fault operator and the nominal operator are used to propagate reachable densities forward and steer the faulty distribution back toward the nominal target through Gaussian mixture covariance steering.

This density-level formulation addresses several limitations of existing approaches. Unlike classical residual-based FDI methods, which assume accurate known models and degrade under nonlinearities and stochastic uncertainties, the probability-flow construction converts the stochastic problem into a deterministic one that reproduces the stochastic marginals. Unlike trajectory-level data-driven controllers, which degrade under distributional shift, our framework provides certifiable $\mathbb W_2$ bounds on long-horizon stability via contraction regularization. We have three main contributions in this paper. First, we derive fault-indexed PF operators for stochastic fault dynamics via their probability-flow representation, establish explicit \(\mathbb W_2\) bounds relating fault-driven and nominal density evolutions, and learn the associated transport maps directly from trajectory data. Second, we prove that the observable endpoint flow-matching residual controls the deployed operator error in \(\mathbb W_2\), and that contraction regularization provides explicit long-horizon \(\mathbb W_2\) bounds between the learned operator family, the true fault-driven density flow, and the nominal system. Third, we show that the resulting operator library supports an online FDIR architecture for fault inference and recovery through reachable-density propagation and Gaussian-mixture-based tracking.
\section{Preliminaries}
We consider a safety-critical nonlinear system subject to unknown fault inputs trained from a finite library. Our goal is to detect and identify the active fault by tracking how each fault reshapes the state distribution over time, and to synthesize a corrective control that steers the distribution back toward its nominal trajectory. To this end, we work in the space of probability densities and characterize fault-induced distributional shifts via fault-indexed Perron-Frobenius operators, which propagate densities forward under each inferred fault profile.

Let \((\mathscr{X},\mathscr{B}(\mathscr{X}),\mu)\) be a measurable state space, where \(\mathscr{B}(\mathscr{X})\) is the Borel \(\sigma\)-algebra on \(\mathscr{X}\subseteq\mathbb{R}^n\) and \(\mu\) is a reference measure on \((\mathscr{X},\mathscr{B}(\mathscr{X}))\). Consider the nonlinear control-affine system
\begin{equation}
\label{eq:nonlinear_system}
\dot{\xb}=f(\xb,t)+g(\xb,t)\ub+\psi(\xb,t)\mathbf{w},
\end{equation}
where \(\xb:\R_{\ge0}\to\mathscr X\) is the state, \(\ub:\mathscr X\times\R_{\ge0}\to\R^m\) is the control input, and \(\mathbf{w}:\R_{\ge0}\to\R^p\) is the fault. Moreover, \(f:\mathscr X\times\R_{\ge0}\to\R^n\), \(g:\mathscr X\times\R_{\ge0}\to\R^{n\times m}\), and \(\psi:\mathscr X\times\R_{\ge0}\to\R^{n\times p}\) are sufficiently smooth, and \(\ub_{cl}:\mathscr X\times\R_{\ge0}\to\R^m\) is a nominal control feedback law. During the offline learning and inference stages, we have \(\ub=\ub_{cl}\). During recovery, an additive corrective input \(\ub_{rec}\) is synthesized and the applied control becomes \(\ub=\ub_{cl}+\ub_{rec}\).
The admissible fault library is \(\mathscr F=\{\mathbf w_0,\mathbf w_1,\ldots,\mathbf w_{N_f}\}\), with nominal, i.e., fault-free, profile \(\mathbf w_0\) defined as \(\mathbf w_0(t) = 0\). 

For each \(\mathbf w\in\mathscr F\), the closed-loop fault-indexed vector field \(\mathbf F_{\mathbf w}:\mathscr X\times\R_{\ge0}\to\R^n\) is expressed by
\(
\mathbf F_{\mathbf w}(\xb,t):=
f(\xb,t)+g(\xb,t)\ub_{cl}(\xb,t)+\psi(\xb,t)\mathbf{w}(t).
\)
Each \(\mathbf{F}_\wb\) defines a distinct closed-loop vector field with associated flow map. Under stochastic disturbance, the state density evolves according to the Fokker--Planck equation. Assume \(\mathbf F_{\mathbf w}\) generates a two-time flow map
\(
\Phi_{s,t}^{\mathbf F_{\mathbf w}}:\mathscr X\to\mathscr X,
\,
\Phi_{s,t}^{\mathbf F_{\mathbf w}}(\xb_s)=\xb_t \), for \(0\le s\le t
\).
Under stochastic disturbance, the system evolves as the It\^o SDE
\begin{equation}
\label{eq:nonlinear_sde}
d\xb=\mathbf F_{\mathbf w}(\xb,t)\,dt+\sigma(\xb,t)\,d\mathscr W(t),
\qquad
\xb_0\sim\rho_0,
\end{equation}
where \(\sigma:\mathscr X\times\R_{\ge0}\to\R^{n\times q}\) is the diffusion coefficient, \(\mathscr W:\R_{\ge0}\to\R^q\) is a standard \(q\)-dimensional Brownian motion, and \(\rho(\xb, 0) = \rho_0(\xb)\) is the initial density. We assume \(\exists\,L_1,L_2>0\) such that
\begin{equation*}
\begin{aligned}
&\|\mathbf F_{\mathbf w}(\xb,t)\|^2
+\|\sigma(\xb,t)\|_F^2\le L_1(1+\|\xb\|^2), \\
&\|\mathbf F_{\mathbf w}(\xb,t)
-\mathbf F_{\mathbf w}(\yb,t)\|
+\|\sigma(\xb,t)-\sigma(\yb,t)\|\le L_2\|\xb-\yb\|,
\end{aligned}
\end{equation*}
for all \(\xb,\yb\in\mathscr X\) and \(t\in\R_{\ge0}\), where $\|\cdot\|_F$ denotes the Frobenius norm. These conditions ensure the existence and uniqueness of strong solutions to~\eqref{eq:nonlinear_sde}~\cite{lasota2013chaos}. Let \(\rho(\xb,t)\) denote the density of~\eqref{eq:nonlinear_sde}. Rather than tracking individual trajectories, we track the evolution of the entire state density $\rho(\mathbf{x},t)$, which is formalized via the Perron-Frobenius operator.

\begin{definition}[Perron--Frobenius operators]
\label{def:pfo}
For deterministic dynamics \(\dot{\xb}=\mathbf F_{\mathbf w}(\xb,t)\) that generates flow map $\Phi_{s,t}^{\mathbf F_{\mathbf w}}$ for $0 \le s \le t$, the Perron--Frobenius operator~\cite{lasota2013chaos} on measures $\mathbf P_{s,t}^{\mathbf F_{\mathbf w}}:\CMcal P_2(\mathscr X)\to\CMcal P_2(\mathscr X)$, where \(\mathcal{P}_2(\mathscr{X})\) denotes the set of Borel probability measures on \(\mathscr{X}\) with finite second moments, is defined by pushforward along the flow map, $\mathbf P_{s,t}^{\mathbf F_{\mathbf w}}\mu := \bigl(\Phi_{s,t}^{\mathbf F_{\mathbf w}}\bigr)_{\!\#}\mu$. That is, for every Borel set \(A\subseteq\mathscr X\), 
\[
\bigl(\mathbf P_{s,t}^{\mathbf F_{\mathbf w}}\mu\bigr)(A)
=
\mu \Bigl(\bigl(\Phi_{s,t}^{\mathbf F_{\mathbf w}}\bigr)^{-1}(A)\Bigr),
\]
and bounded measurable \(\varphi:\mathscr X\to\R\), 
\[
\int \varphi\,d\bigl(\mathbf P_{s,t}^{\mathbf F_{\mathbf w}}\mu\bigr)
=
\int \varphi\circ\Phi_{s,t}^{\mathbf F_{\mathbf w}}\,d\mu.
\]
If \(\mu_s(d\xb)=\rho_s(\xb)\,d\xb\) with \(\rho_s\in\mathcal D(\mathscr X)\), the density-level Perron--Frobenius operator
\(\mathscr P_{s,t}^{\mathbf F_{\mathbf w}}:\mathcal D(\mathscr X)\to\mathcal D(\mathscr X)\)
is the Radon--Nikodym representation of \(\mathbf P_{s,t}^{\mathbf F_{\mathbf w}}\), $\bigl(\mathscr P_{s,t}^{\mathbf F_{\mathbf w}}\rho_s\bigr)(\xb)
:=
\frac{d \bigl(\mathbf P_{s,t}^{\mathbf F_{\mathbf w}}\mu_s\bigr)}{d\xb}(\xb)$, such that
\begin{equation*} 
\label{eq:pf_def}
\bigl(\mathscr P_{s,t}^{\mathbf F_{\mathbf w}}\rho_s\bigr)(\xb)
=
\rho_s\!\Bigl(
\bigl(\Phi_{s,t}^{\mathbf F_{\mathbf w}}\bigr)^{-1}(\xb)
\Bigr)
\Bigl|
\det D \bigl(\Phi_{s,t}^{\mathbf F_{\mathbf w}}\bigr)^{-1}(\xb)
\Bigr|,
\end{equation*}
whenever \(\Phi_{s,t}^{\mathbf F_{\mathbf w}}\) is a \(C^1\)-diffeomorphism.
\end{definition}

To measure the separation between densities propagated under different fault hypotheses, which is the foundation of fault detectability, we equip $\mathcal{P}_2(\mathscr{X})$ with the 2-Wasserstein metric.

\begin{definition}[2-Wasserstein metric]
\label{def:w2}
Let \(\CMcal P_2(\mathscr X)\) be the set of Borel probability measures on \(\mathscr X\) with finite second moments, and let \(\Pi(\mu,\nu)\) denote the set of measures on $\mathscr{X} \times \mathscr{X}$ with marginals \(\mu,\nu\in\CMcal P_2(\mathscr X)\). The \(2\)-Wasserstein distance is
\begin{equation*}
\label{eq:w2_definition}
\mathbb W_2(\mu,\nu)
:=
\left(
\inf_{\pi\in\Pi(\mu,\nu)}
\int_{\mathscr X\times\mathscr X}
\|\xb-\yb\|^2\,d\pi(\xb,\yb)
\right)^{1/2}.
\end{equation*}
If \(\mu,\nu\in\CMcal P_2(\mathscr X)\) are absolutely continuous with respect to Lebesgue measure, then the Radon--Nikodym theorem implies there exists densities \(\rho=\frac{d\mu}{d\xb}\) and \(\eta=\frac{d\nu}{d\xb}\) such that \(\mu(d\xb)=\rho(\xb)\,d\xb\) and \(\nu(d\xb)=\eta(\xb)\,d\xb\). In that case we also write \(\mathbb W_2(\rho,\eta)\). We also denote the space of admissible probability densities on $\mathscr{X}$ with finite second moments by
\[
\mathcal D(\mathscr X)
:=
\left\{
\rho\in L^1(\mathscr X):
\begin{array}{l}
\rho\ge0,\;
\int_{\mathscr X}\rho(\xb)\,d\xb=1,\\[1mm]
\int_{\mathscr X}\|\xb\|^2\rho(\xb)\,d\xb<\infty
\end{array}
\right\},
\]
where \(L^1(\mathscr{X})\) denotes the space of Lebesgue-integrable functions on \(\mathscr{X}\).
\end{definition}

Both families \(\{\mathbf P_{s,t}^{\mathbf F_{\mathbf w}}\}_{0\le s\le t}\) and \(\{\mathscr P_{s,t}^{\mathbf F_{\mathbf w}}\}_{0\le s\le t}\) satisfy the two-time semigroup property
\[
\mathscr P_{s,s}^{\mathbf F_{\mathbf w}}=I,
\qquad
\mathscr P_{r,t}^{\mathbf F_{\mathbf w}}
=
\mathscr P_{s,t}^{\mathbf F_{\mathbf w}}
\mathscr P_{r,s}^{\mathbf F_{\mathbf w}},
\qquad 0\le r\le s\le t.
\]
with the same identity for \(\mathbf P_{s,t}^{\mathbf F_{\mathbf w}}\). 

In practice, the system is subject to stochastic disturbance. Hence, $\mathbf{x}$ is governed by the It\^{o} SDE \eqref{eq:nonlinear_sde} and the state density satisfies the Fokker--Planck equation below. Crucially, this stochastic density evolution can be exactly reproduced by the deterministic probability-flow ODE, whose pushforward defines 
a fault-indexed PF operator (PFO). The density \(\rho\) satisfies the Fokker--Planck equation
\begin{equation}
\label{eq:fokker_planck}
\frac{\partial \rho}{\partial t}
=
-\nabla_{\xb}\cdot\!\bigl(\rho\,\mathbf F_{\mathbf w}\bigr)
+\frac{1}{2}\nabla_{\xb}\cdot\nabla_{\xb}\cdot(\rho\,\Sigma),
\qquad
\Sigma:=\sigma\sigma^\top.
\end{equation}
\begin{theorem}[Probability Flow PF Operator]
\label{thm:probability_flow_pfo}
Let \(\mathbf w\in\mathscr F\). Consider \(\rho(\xb,t)\) as a strictly positive solution of~\eqref{eq:fokker_planck} on \([s,T]\) with \(\rho(\xb,s)=\rho_s(\xb)\). Define
\begin{equation}
\label{eq:probability_flow_ode}
\begin{aligned}
\mathbf v_{\mathbf w}(\xb,t)
:={}&
\mathbf F_{\mathbf w} -\frac{1}{2}\Bigl[(\nabla_\xb\cdot\Sigma)
+\Sigma\nabla_\xb\log\rho\Bigr],
\end{aligned}
\end{equation}
where $\left[\left(\nabla_{\mathbf{x}} \cdot \Sigma\right)(\mathbf{x}, t)\right]_i=\sum_{j=1}^n \frac{\partial \Sigma_{i j}}{\partial x_j}(\mathbf{x}, t)$, for $i=1, \ldots, n$ denotes the row-wise divergence of $\Sigma$. Assume \(\mathbf v_{\mathbf w}\) generates the deterministic flow $\Phi_{s,t}^{\mathbf v_{\mathbf w}}:\mathscr X\to\mathscr X$ with PFO $\mathscr P_{s,t}^{\mathbf v_{\mathbf w}}:\mathcal D(\mathscr X)\to\mathcal D(\mathscr X)$. Consequently, \(\rho\) satisfies
\begin{equation}
\label{eq:continuity_equation_probability_flow}
\frac{\partial \rho}{\partial t}
+\nabla_\xb\cdot\!\bigl(\mathbf v_{\mathbf w}(\xb,t)\rho(\xb,t)\bigr)=0,
\end{equation}
and therefore, \(\rho(\cdot,t)=\mathscr P_{s,t}^{\mathbf v_{\mathbf w}}\rho_s\), for \(s\le t\le T\). Hence, the deterministic probability flow PFO reproduces the marginals of the stochastic fault dynamics.
\end{theorem}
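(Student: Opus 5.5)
The plan has two stages: first a direct algebraic rewriting of the Fokker--Planck equation~\eqref{eq:fokker_planck} into the continuity form~\eqref{eq:continuity_equation_probability_flow}, then a uniqueness argument for the continuity equation to identify \(\rho(\cdot,t)\) with the pushforward \(\mathscr P_{s,t}^{\mathbf v_{\mathbf w}}\rho_s\).

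For the first stage, I expand the second-order term componentwise. Writing \(\nabla_\xb\cdot\nabla_\xb\cdot(\rho\Sigma)=\sum_{i,j}\partial_i\partial_j(\rho\Sigma_{ij})\), I take the inner derivative first:
\[
\sum_j\partial_j(\rho\Sigma_{ij})=\rho\,[\nabla_\xb\cdot\Sigma]_i+[\Sigma\nabla_\xb\rho]_i .
\]
Here I use the symmetry of \(\Sigma=\sigma\sigma^\top\) so that the index contraction matches the row-wise divergence defined in the statement. Since \(\rho>0\) on \([s,T]\), I may write \(\nabla_\xb\rho=\rho\nabla_\xb\log\rho\). This gives
\[
\tfrac12\nabla_\xb\cdot\nabla_\xb\cdot(\rho\Sigma)=\nabla_\xb\cdot\Bigl(\rho\cdot\tfrac12\bigl[(\nabla_\xb\cdot\Sigma)+\Sigma\nabla_\xb\log\rho\bigr]\Bigr).
\]
Substituting this into~\eqref{eq:fokker_planck} and collecting terms under a single divergence yields \(\partial_t\rho=-\nabla_\xb\cdot(\rho\,\mathbf v_{\mathbf w})\), which is~\eqref{eq:continuity_equation_probability_flow}. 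This step is routine, given enough smoothness of \(\rho\) and \(\sigma\) to justify the product rule.

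For the second stage, I show that \(\tilde\rho(\cdot,t):=\mathscr P_{s,t}^{\mathbf v_{\mathbf w}}\rho_s\) also solves the continuity equation with the same initial datum. In weak form, for a test function \(\varphi\in C_c^\infty\), Definition~\ref{def:pfo} gives \(\int\varphi\,\tilde\rho_t=\int\varphi(\Phi_{s,t}^{\mathbf v_{\mathbf w}}(\yb))\rho_s(\yb)\,d\yb\). Differentiating in \(t\) and using \(\partial_t\Phi_{s,t}=\mathbf v_{\mathbf w}(\Phi_{s,t},t)\) gives
\[
\tfrac{d}{dt}\int\varphi\,\tilde\rho_t=\int\nabla\varphi\cdot\mathbf v_{\mathbf w}\,\tilde\rho_t .
\]
This is the weak continuity equation. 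Alternatively, when the flow map is a diffeomorphism, one can use the Jacobian formula in Definition~\ref{def:pfo} together with Liouville's formula for \(\det D\Phi\).

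The key and hardest step is uniqueness: I must show that \(\rho\) and \(\tilde\rho\) coincide. Because \(\mathbf v_{\mathbf w}\) depends on \(\rho\) itself, I would freeze the velocity field, treating \(\mathbf v_{\mathbf w}\) as a given field computed from the known solution \(\rho\), so that the problem becomes a linear transport equation. I would then invoke the standard uniqueness result for linear continuity equations with locally Lipschitz velocity that generates a well-defined flow, namely the superposition or characteristics argument (Ambrosio--Gigli--Savaré). This requires the assumed well-posedness of \(\Phi_{s,t}^{\mathbf v_{\mathbf w}}\) and some integrability such as \(\int_s^T\!\int\|\mathbf v_{\mathbf w}\|\rho\,d\xb\,dt<\infty\). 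The Lipschitz bounds on \(\mathbf F_{\mathbf w}\) and \(\sigma\) control the drift part of \(\mathbf v_{\mathbf w}\), but the score term \(\Sigma\nabla\log\rho\) must be assumed regular enough; this is where I expect the real difficulty lies, and I would absorb it into the hypothesis that \(\mathbf v_{\mathbf w}\) generates a flow. Concretely, I would run a duality argument: solve the backward transport equation \(\partial_t\varphi+\mathbf v_{\mathbf w}\cdot\nabla\varphi=0\) with \(\varphi_t=\varphi\circ\Phi_{\tau,t}\) along the flow, and pair it with \(\rho-\tilde\rho\) to get \(\int\varphi(\rho_\tau-\tilde\rho_\tau)=0\) for all \(\varphi\), which gives equality for \(t\in[s,T]\).
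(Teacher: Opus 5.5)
Your proposal is correct and follows the same route as the paper. You use the product-rule identity $\nabla_\xb\cdot(\Sigma\rho)=[(\nabla_\xb\cdot\Sigma)+\Sigma\nabla_\xb\log\rho]\rho$ to turn the Fokker--Planck equation into the continuity equation driven by $\mathbf v_{\mathbf w}$, then invoke uniqueness for that frozen-velocity, hence linear, transport equation to identify $\rho(\cdot,t)$ with $\mathscr P_{s,t}^{\mathbf v_{\mathbf w}}\rho_s$. The paper simply cites this uniqueness from Ambrosio et al., whereas you spell out the weak-form check and the duality argument; one small index slip there is that the backward test function should be $\varphi\circ\Phi_{t,\tau}^{\mathbf v_{\mathbf w}}$ (the flow from time $t$ to time $\tau$), not $\varphi\circ\Phi_{\tau,t}$.
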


\begin{proof}
Rewrite~\eqref{eq:fokker_planck} as
\[
\frac{\partial \rho}{\partial t}
=
-\nabla_\xb\cdot(\mathbf F_{\mathbf w}\rho)
+\frac{1}{2}\nabla_\xb\cdot\!\bigl(\nabla_\xb\cdot(\Sigma\rho)\bigr).
\]
Considering the product rule, we have \(
\nabla_\xb\cdot(\Sigma\rho)
=
(\nabla_\xb\cdot\Sigma)\rho+\Sigma\nabla_\xb\rho.
\)
Since \(\rho>0\) and \(\nabla_\xb\rho=\rho\nabla_\xb\log\rho\), one obtains
\[
\nabla_\xb\cdot(\Sigma\rho)
=
\Bigl[(\nabla_\xb\cdot\Sigma)+\Sigma\nabla_\xb\log\rho\Bigr]\rho.
\]
Substituting into the Fokker--Planck equation yields
\[
\frac{\partial \rho}{\partial t}
=
-\nabla_\xb\cdot\!\Bigl[
\Bigl(
\mathbf F_{\mathbf w}
-\frac{1}{2}\bigl[(\nabla_\xb\cdot\Sigma)+\Sigma\nabla_\xb\log\rho\bigr]
\Bigr)\rho
\Bigr],
\]
which is exactly~\eqref{eq:continuity_equation_probability_flow} by \eqref{eq:probability_flow_ode}. Thus, \(\rho\) satisfies the Liouville equation driven by \(\mathbf v_{\mathbf w}\) with the initial condition
\(\rho_s\)~\cite{boffi2023probability}. By uniqueness of solutions to that transport equation~\cite{ambrosio2014continuity}, one has \(\rho(\cdot,t)=\mathscr P_{s,t}^{\mathbf v_{\mathbf w}}\rho_s\) for all \(t\in[s,T]\).
\end{proof}

For the probability flow $\dot{\xb}=\mathbf{v}_{\mathbf w}(\xb,t)$ of~\eqref{eq:nonlinear_sde} (Theorem~\ref{thm:probability_flow_pfo}), combining~\eqref{eq:probability_flow_ode} with the decomposition of $\mathbf{F}_{\mathbf{w}}$ yields the perturbed form
\begin{align}
\label{eq:fault_decomposition_density}
\mathbf{F}_{\mathbf{w}}(\xb,t) &= \mathbf{F}_{\mathbf{w}_0}(\xb,t)+\psi(\xb,t)\mathbf{w}(t),\\
\label{eq:probability_flow_as_deterministic_perturbation}
\dot{\xb} &= \mathbf{F}_{\mathbf{w}_0}(\xb,t)+\mathbf{d}^{\mathbf{w}}(\xb,t),\\
\label{eq:probability_flow_perturbation_term}
\mathbf{d}^{\mathbf{w}}(\xb,t) &:= \psi\mathbf{w}-\tfrac{1}{2}(\nabla_{\xb}\!\cdot\Sigma)-\tfrac{1}{2}\Sigma\nabla_{\xb}\log\rho.
\end{align}
Let $\xb^{\mathbf F_{\mathbf{w}_0}}$ and $\xb^{\mathbf v_{\mathbf w}}$ solve~\eqref{eq:nominal_closed_loop_dynamics} and~\eqref{eq:probability_flow_as_deterministic_perturbation}, respectively, and let $\mathbf{q}(\lambda,t)$, $\lambda\in[0,1]$, smoothly interpolate $\mathbf{q}(0,t)=\xb^{\mathbf F_{\mathbf{w}_0}}(t)$ to $\mathbf{q}(1,t)=\xb^{\mathbf v_{\mathbf w}}(t)$. A family of particular solutions is
\begin{equation}
\label{eq:particular_solutions_virtual_system}
\dot{\mathbf{q}}(\lambda,t)=\mathbf{F}_{\mathbf{w}_0}(\mathbf{q}(\lambda,t),t)+\lambda\,\mathbf{d}^{\mathbf{w}}(\xb^{\mathbf v_{\mathbf w}}(t),t).
\end{equation}
Thus, the probability-flow ODE is a deterministic perturbation of the nominal closed-loop dynamics, which motivates the use of contraction theory to analyze the gap between these two systems and to regularize operator learning during training.
\subsection{Wasserstein Contraction of Density}
Consider the nominal closed-loop dynamics of~\eqref{eq:nonlinear_system}, i.e., $\mathbf{w} = 0$, given by
\begin{equation}
\label{eq:nominal_closed_loop_dynamics}
\dot{\xb}=\mathbf{F}_{\mathbf{w}_0}(\xb,t), \ \delta \dot{\xb}=\frac{\partial \mathbf{F}_{\mathbf{w}_0}}{\partial \xb}\delta \xb, \ \xb(0)=\xb_0,
\end{equation}
where $\delta \xb$ is the infinitesimal displacement of any two trajectories of~\eqref{eq:nominal_closed_loop_dynamics} defined as $\xb_2 - \xb_1 = \int_{\xb_1}^{\xb_2}\delta \xb$, and $\mathbf{F}_{\mathbf{w}_0}(\xb,t):= f(\xb,t)+ g(\xb,t) \mathbf{u}_{c l}$ with some desired closed-loop controller $\ub_{c l}$. 

\begin{lemma}[Deterministic Contraction]
\label{lem:deterministic_contraction}
If there exists a uniformly positive definite metric $\mathbf{M}(\xb,t) = \Theta(\xb,t)^\top \Theta(\xb,t) \succ 0$, $\forall \xb, t$, with a smooth coordinate transformation of the virtual displacement $\delta \zb=\Theta(\xb,t) \delta \xb$, and for $\alpha \in \R$, such that
\begin{equation}
\label{eq:deterministic_contraction_condition}
\dot{\mathbf{M}}(\xb,t)+2\,\operatorname{sym}\Bigl(\mathbf{M}(\xb,t)
\frac{\partial \mathbf{F}_{\mathbf{w}_0}}{\partial \xb}\Bigr)
\preceq 2\alpha \mathbf{M}(\xb,t),
\end{equation}
$\forall \xb,t$, where $\operatorname{sym}(\mathbf{A}) := \frac{1}{2}\left(\mathbf{A}+\mathbf{A}^\top\right)$ for any square matrix \(\mathbf{A}\), holds for \eqref{eq:nominal_closed_loop_dynamics}, then any two trajectories of~\eqref{eq:nominal_closed_loop_dynamics} are bounded according to \(\|\delta \zb(t)\| = \|\Theta(\xb,t) \delta \xb(t)\| \leq\|\delta \zb(0)\| e^{\alpha t}\). If $\alpha < 0$, the system is said to be contracting.

\begin{proof}
See~\cite{Lohmiller1998,tsukamoto2021contraction} for more details.
\end{proof}
\end{lemma}
\begin{theorem}[Contraction of Probability Flow]
  \label{thm:wasserstein_contraction_bound} 
Assume that \eqref{eq:nominal_closed_loop_dynamics} satisfies the deterministic contraction condition~\eqref{eq:deterministic_contraction_condition} of Lemma~\ref{lem:deterministic_contraction}, where $\exists \underline{m}, \overline{m}\in \R_{> 0}$, such that $\underline{m}\mathbf{I} \preceq \mathbf{M}(\xb, t) \preceq \overline{m}\mathbf{I}, \forall \xb, t$. Suppose for the perturbed dynamics~\eqref{eq:probability_flow_as_deterministic_perturbation}, $\exists \overline{d} \in \R_{> 0}$ such that $\overline{d} = \sup_{\xb, t} \|\mathbf{d}^{\mathbf{w}}(\xb,t)\| < \infty$, then the Wasserstein distance between the distributions $\mu_t$ in~\eqref{eq:nominal_closed_loop_dynamics} and $\nu_t$ of~\eqref{eq:probability_flow_as_deterministic_perturbation} with initial conditions $\xb^{\mathbf F_{\mathbf{w}_0}}(0) \sim \mu_0$ and $\xb^{\mathbf v_{\mathbf w}}(0) \sim \nu_0$ is bounded as
\begin{equation}
\label{eq:wasserstein_contraction_bound}
\mathbb{W}_2\left(\mu_t, \nu_t\right) \leq \kappa e^{\alpha t} \mathbb{W}_2\left(\mu_0, \nu_0\right)+\delta_t,
\end{equation}
where $\kappa:=\sqrt{\overline{m} / \underline{m}}$ and $\delta_t:=\kappa \overline{d} \int_0^t e^{\alpha(t-r)} d r$.

\begin{proof}
Using the contraction condition~\eqref{eq:deterministic_contraction_condition} of Lemma~\ref{lem:deterministic_contraction}, for $\mathbf{M}(\xb, t) = \Theta(\xb, t)^\top \Theta(\xb, t)$, we have
\begin{equation*}
  \begin{split}
  \frac{d}{d t}\| \Theta(\mathbf{q}, t) \partial_\lambda \mathbf{q} \| &= (2 \|\Theta(\mathbf{q}, t) \partial_\lambda \mathbf{q} \|)^{-1} \frac{d}{d t} \partial_\lambda \mathbf{q}^{\top} \mathbf{M}(\mathbf{q}, t) \partial_\lambda \mathbf{q} \\
  &\leq \alpha \|\Theta(\mathbf{q}, t) \partial_\lambda \mathbf{q} \| + \|\Theta(\mathbf{q}, t) \partial_\lambda \mathbf{d}_\lambda\|,
  \end{split}
\end{equation*}
where $\partial_\lambda \mathbf{q}=\partial \mathbf{q} / \partial \lambda$ and $\partial_\lambda \mathbf{d}_\lambda=\partial \mathbf{d}_\lambda / \partial \lambda=\mathbf{d}^{\mathbf{w}}\left(\xb^{\mathbf v_{\mathbf w}}, t\right)$. Let $V(\mathbf{q}, \delta\mathbf{q}, t) = \int_{\xb^{\mathbf F_{\mathbf{w}_0}}}^{\xb^{\mathbf v_{\mathbf w}}} \|\Theta(\mathbf{q}, t) \delta\mathbf{q}\|$ be the generalized length with respect to the Riemannian metric $\mathbf{M}(\mathbf{q}, t)$. Taking the integral with respect to $\lambda$ gives
\begin{equation*}
\frac{d}{d t} \int_0^1\left\|\Theta \partial_\lambda \mathbf{q}\right\| d \lambda \leq \int_0^1 \alpha\left\|\Theta \partial_\lambda \mathbf{q}\right\|+\left\|\Theta \mathbf{d}^{\mathbf{w}}\left(\xb^{\mathbf v_{\mathbf w}}, t\right)\right\| d \lambda ,
\end{equation*}
which implies $\dot{V}(\mathbf{q}, \delta\mathbf{q}, t) \leq \alpha V(\mathbf{q}, \delta\mathbf{q}, t) + \sup_{\mathbf{q},\xb^{\mathbf v_{\mathbf w}}, t}\|\Theta(\mathbf{q}, t) \mathbf{d}^{\mathbf{w}}\left(\xb^{\mathbf v_{\mathbf w}}, t\right)\|$. By Gronwall's inequality,
\begin{equation*}
  V(\mathbf{q}, \delta\mathbf{q}, t) \leq e^{\alpha t} V(\mathbf{q}, \delta\mathbf{q}, 0) + \sqrt{\overline{m}} \overline{d} \int_0^t e^{\alpha(t-r)} d r.
\end{equation*}
Now define the Riemannian distance induced by $\mathbf{M}$ as
\begin{equation*}
  d_\Mb(\xb^{\mathbf F_{\mathbf{w}_0}}(t), \xb^{\mathbf v_{\mathbf w}}(t), t) := \inf_{\mathbf{q} \in \Gamma(\xb^{\mathbf F_{\mathbf{w}_0}}(t), \xb^{\mathbf v_{\mathbf w}}(t))} V(\mathbf{q}, \delta\mathbf{q}, t),
\end{equation*}
where $\Gamma(\xb^{\mathbf F_{\mathbf{w}_0}}(t), \xb^{\mathbf v_{\mathbf w}}(t))$ denotes the set of smooth paths connecting $\xb^{\mathbf F_{\mathbf{w}_0}}(t)$ and $\xb^{\mathbf v_{\mathbf w}}(t)$. Since the above bound holds for every admissible path, it also holds for the infimum. Using the metric bounds $\sqrt{\underline{m}}\left\|\xb^{\mathbf F_{\mathbf{w}_0}}(t)-\xb^{\mathbf v_{\mathbf w}}(t)\right\| \leq d_\Mb(\xb^{\mathbf F_{\mathbf{w}_0}}(t), \xb^{\mathbf v_{\mathbf w}}(t), t) \leq \sqrt{\overline{m}}\left\|\xb^{\mathbf F_{\mathbf{w}_0}}(t)-\xb^{\mathbf v_{\mathbf w}}(t)\right\|$, $\forall t$, we have
\begin{equation}
\label{eq:wasserstein_contraction_bound_pointwise}
\| \xb^{\mathbf F_{\mathbf{w}_0}}(t)-\xb^{\mathbf v_{\mathbf w}}(t)\| \leq \kappa e^{\alpha t} \| \xb^{\mathbf F_{\mathbf{w}_0}}(0)-\xb^{\mathbf v_{\mathbf w}}(0)\|+\delta_t.
\end{equation}
Take $\pi_0\in\Pi(\mu_0,\nu_0)$, $(\mathbf{X}_0,\mathbf{Y}_0)\sim\pi_0$, and $\mathbf{X}_t := \Phi_{0,t}^{\mathbf F_{\mathbf{w}_0}}(\mathbf{X}_0)$, $\mathbf{Y}_t := \Phi_{0,t}^{\mathbf v_{\mathbf w}}(\mathbf{Y}_0)$, so $\mu_t = (\Phi_{0,t}^{\mathbf F_{\mathbf{w}_0}})_\# \mu_0$ and $\nu_t = (\Phi_{0,t}^{\mathbf v_{\mathbf w}})_\# \nu_0$. Then $\pi_t:=\operatorname{Law}(\mathbf{X}_t,\mathbf{Y}_t)\in\Pi(\mu_t,\nu_t)$. Since the pointwise bound~\eqref{eq:wasserstein_contraction_bound_pointwise} holds $\pi_0$-a.s., taking $L^2$ norms and applying Minkowski, then using $\mathbb{W}_2(\mu_t,\nu_t) \leq \bigl(\mathbb{E}\|\mathbf{X}_t-\mathbf{Y}_t\|^2\bigr)^{1/2}$ and the infimum over $\pi_0$, yields
\begin{align*}
\|\mathbf{X}_t-\mathbf{Y}_t\| &\leq \kappa\,e^{\alpha t}\|\mathbf{X}_0-\mathbf{Y}_0\|+\delta_t,\\
\bigl(\mathbb{E}\|\mathbf{X}_t-\mathbf{Y}_t\|^2\bigr)^{1/2} &\leq \kappa\,e^{\alpha t}\bigl(\mathbb{E}\|\mathbf{X}_0-\mathbf{Y}_0\|^2\bigr)^{1/2}+\delta_t,\\
\mathbb{W}_2(\mu_t,\nu_t) &\leq \kappa\,e^{\alpha t}\mathbb{W}_2(\mu_0,\nu_0)+\delta_t.
\end{align*}
\end{proof} 
\end{theorem}

\subsection{Density Reachability and Fault Detectability}
Theorem~\ref{thm:wasserstein_contraction_bound} quantifies how fast the distributional gap between the nominal and faulty densities grows (or decays) as a function of the perturbation magnitude $\overline{d}$ and the contraction rate $\alpha$. We now use this bound to formalize when the fault-induced density shift is large enough to be reliably detected and when two distinct faults produce sufficiently separated density evolutions to be distinguished from one another.

\begin{definition}[Forward Reachable Densities]
\label{def:forward_reachable}
Let \(\mathfrak D_s\subseteq\mathcal D(\mathscr X)\) be a family of admissible initial densities. The forward reachable density family under fault profile \(\mathbf w\) is
\[
\mathfrak R_{s,t}^{+}(\mathfrak D_s,\mathbf w)
:=
\left\{
\mathscr P_{s,t}^{\mathbf v_{\mathbf w}}\rho_s:\rho_s\in\mathfrak D_s
\right\}.
\]
When the initial density is fixed as \(\rho_s\in\mathcal D(\mathscr X)\), the fault-indexed reachable density family is
\begin{equation}
\label{eq:reachable_density_family}
\mathfrak R_{s,t}^{+}(\rho_s)
:=
\left\{
\mathscr P_{s,t}^{\mathbf v_{\mathbf w}}\rho_s:\mathbf w\in\mathscr F
\right\}.
\end{equation}
\end{definition}

The PFO on measures \(\mathbf P_{s,t}^{\mathbf F_{\mathbf w}}\) for the above definitions can be shown to recover classical state reachability by considering and initial law as Dirac masses. The classical reachable set is recovered exactly as a special case, while the density framework additionally quantifies uncertainty over initial conditions and stochastic disturbances, generalizing set-based reachability.

\begin{proposition}[Dirac recovery of state reachability]
\label{prop:dirac_recovery}
Fix \(\mathbf w\in\mathscr F\), let
\(\mathscr R_{s,t}^{+}\left(\mathcal S_s\right)
:=
\Phi_{s,t}^{\mathbf F_{\mathbf w}}\left(\mathcal S_s\right)\), 
and define \(\mathfrak E(\mathcal S):=\{\delta_{\xb}:\xb\in\mathcal S\}\), where \(\delta_\xb\) denotes the Dirac measure at \(\xb\). Consequently,
\(
\mathfrak E \Bigl(\Phi_{s,t}^{\mathbf F_{\mathbf w}}(\mathcal S_s)\Bigr)
=
\Bigl\{
\mathbf P_{s,t}^{\mathbf F_{\mathbf w}}\delta_{\xb}:\xb\in\mathcal S_s
\Bigr\},
\)
such that the full reachable set is recovered exactly from the Dirac embedding of
\(\mathcal S_s\).
\end{proposition}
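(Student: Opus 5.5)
The plan is to reduce the set equality to a single pointwise identity: the pushforward of a Dirac mass along the flow map is the Dirac mass at the image point. With that identity in hand, both sides of the claim are images of $\mathcal S_s$ under the same map, and the equality follows.

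\emph{Step 1 (pointwise identity).} Fix $\xb\in\mathcal S_s$. By Definition~\ref{def:pfo}, for every Borel set $A\subseteq\mathscr X$,
\[
\bigl(\mathbf P_{s,t}^{\mathbf F_{\mathbf w}}\delta_{\xb}\bigr)(A)
=\delta_{\xb}\Bigl(\bigl(\Phi_{s,t}^{\mathbf F_{\mathbf w}}\bigr)^{-1}(A)\Bigr).
\]
The right-hand side equals $1$ exactly when $\xb\in(\Phi_{s,t}^{\mathbf F_{\mathbf w}})^{-1}(A)$, that is, when $\Phi_{s,t}^{\mathbf F_{\mathbf w}}(\xb)\in A$. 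Hence it equals $\delta_{\Phi_{s,t}^{\mathbf F_{\mathbf w}}(\xb)}(A)$. Since two Borel measures that agree on every Borel set are equal, we get
\[
\mathbf P_{s,t}^{\mathbf F_{\mathbf w}}\delta_{\xb}=\delta_{\Phi_{s,t}^{\mathbf F_{\mathbf w}}(\xb)}.
\]
As a consistency check, the test-function form $\int\varphi\,d(\mathbf P\delta_{\xb})=\varphi(\Phi_{s,t}^{\mathbf F_{\mathbf w}}(\xb))$ gives the same identity. Dirac masses have finite second moments, so all these objects lie in $\CMcal P_2(\mathscr X)$, where the operator is defined.

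\emph{Step 2 (set equality).} By definition,
\[
\mathfrak E\bigl(\Phi_{s,t}^{\mathbf F_{\mathbf w}}(\mathcal S_s)\bigr)=\bigl\{\delta_{\yb}:\yb=\Phi_{s,t}^{\mathbf F_{\mathbf w}}(\xb),\ \xb\in\mathcal S_s\bigr\}
=\bigl\{\delta_{\Phi_{s,t}^{\mathbf F_{\mathbf w}}(\xb)}:\xb\in\mathcal S_s\bigr\}.
\]
By Step 1, each element $\delta_{\Phi_{s,t}^{\mathbf F_{\mathbf w}}(\xb)}$ equals $\mathbf P_{s,t}^{\mathbf F_{\mathbf w}}\delta_{\xb}$, which gives the claimed equality with $\{\mathbf P_{s,t}^{\mathbf F_{\mathbf w}}\delta_{\xb}:\xb\in\mathcal S_s\}$. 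I would prove both inclusions by element-chasing through this identity.

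\emph{Step 3 (exact recovery).} To show the reachable set is recovered exactly, note that $\mathfrak E$ is injective. This holds because $\delta_{\xb}=\delta_{\yb}$ forces $\xb=\yb$: in $\R^n$ singletons are Borel, so one can evaluate both measures on $\{\xb\}$. Therefore $\mathscr R_{s,t}^{+}(\mathcal S_s)$ is obtained as the set of support points of the pushed-forward Diracs.

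I expect no real obstacle here. The only point of care is measure-theoretic: the flow map must be Borel measurable, so that preimages of Borel sets are Borel, and the pushforward must be taken on measures rather than densities. Measurability follows from continuity of $\Phi_{s,t}^{\mathbf F_{\mathbf w}}$, which the Lipschitz assumption on $\mathbf F_{\mathbf w}$ guarantees. The measure-level operator is required because Diracs have no Lebesgue density, so the density-level $\mathscr P_{s,t}^{\mathbf F_{\mathbf w}}$ does not apply.
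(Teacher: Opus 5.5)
Your proof is correct and follows essentially the same route as the paper. Both establish the pointwise identity $\mathbf P_{s,t}^{\mathbf F_{\mathbf w}}\delta_{\xb}=\delta_{\Phi_{s,t}^{\mathbf F_{\mathbf w}}(\xb)}$ and then read off the set equality; the paper checks the identity with bounded measurable test functions, while you evaluate on Borel sets and keep the test-function form only as a check. Your added injectivity of $\mathfrak E$ is a small, welcome sharpening of the ``recovered exactly'' clause, which the paper leaves implicit.
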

  
\begin{proof}
Let \(\varphi:\mathscr X\to\R\) be any bounded measurable test function. For any \(\xb\in\mathcal S_s\), one has
\begin{equation*}
\begin{split}
\int_{\mathscr X}
\varphi(\yb)\,
d\!\left(
\mathbf P_{s,t}^{\mathbf F_{\mathbf w}}\delta_{\xb}
\right)(\yb)
&=
\int_{\mathscr X}
\varphi\!\left(
\Phi_{s,t}^{\mathbf F_{\mathbf w}}(\yb)
\right)
d\delta_{\xb}(\yb) \\
&=
\varphi\!\left(
\Phi_{s,t}^{\mathbf F_{\mathbf w}}(\xb)
\right).
\end{split}
\end{equation*}
On the other hand, we have
\begin{equation*}
\int_{\mathscr X}\varphi(\yb)\, d\delta_{\Phi_{s,t}^{\mathbf F_{\mathbf w}}(\xb)}(\yb)
=
\varphi\!\left( \Phi_{s,t}^{\mathbf F_{\mathbf w}}(\xb) \right).
\end{equation*}
Thus, 
\(
\mathbf P_{s,t}^{\mathbf F_{\mathbf w}}\delta_{\xb}
=
\delta_{\Phi_{s,t}^{\mathbf F_{\mathbf w}}(\xb)}
\)
for every \(\xb\in\mathcal S_s\). Hence, we obtain $\{
\delta_{\Phi_{s,t}^{\mathbf F_{\mathbf w}}(\xb)}:\xb\in\mathcal S_s
\} = \mathfrak E\!\left(\Phi_{s,t}^{\mathbf F_{\mathbf w}}(\mathcal S_s)
\right)$. Since \(\Phi_{s,t}^{\mathbf F_{\mathbf w}}(\mathcal S_s)=\mathscr R_{s,t}^{+}(\mathcal S_s)\) by definition, the identity follows immediately.
\end{proof}
\begin{definition}[Density detectability and identifiability]
\label{def:density_reachability}
Fix \(\rho_s\in\mathcal D(\mathscr X)\), an interval \([s,t]\), and \(\epsilon>0\). A fault \(\mathbf w\neq\mathbf w_0\) is \(\epsilon\)-detectable in density on \([s,t]\) from \(\rho_s\) if
\begin{equation}
\label{eq:density_detectability}
\mathbb W_2\!\left(
\mathscr P_{s,t}^{\mathbf v_{\mathbf w}}\rho_s,\;
\mathscr P_{s,t}^{\mathbf v_{\mathbf w_0}}\rho_s
\right)\ge\epsilon.
\end{equation}
Two fault profiles \(\mathbf w_i,\mathbf w_j\in\mathscr F\) are \(\epsilon\)-identifiable in density on \([s,t]\) from \(\rho_s\) if
\begin{equation}
\label{eq:density_identifiability}
\mathbb W_2\!\left(
\mathscr P_{s,t}^{\mathbf v_{\mathbf w_i}}\rho_s,\;
\mathscr P_{s,t}^{\mathbf v_{\mathbf w_j}}\rho_s
\right)\ge\epsilon.
\end{equation}
\end{definition}
\begin{theorem}[Fault detectability bound]
\label{thm:detect_bound}
Fix \(\mathbf w_i,\mathbf w_j\in\mathscr F\). Let
\(s^{\mathbf w}(\xb,t):=\nabla_\xb\log\rho^{\mathbf w}(\xb,t)\) denote the
score of the fault-indexed marginal along~\eqref{eq:nonlinear_sde}, and assume
\(\exists\,\bar\psi,\bar\sigma,\bar s_{ij}<\infty\), such that
\begin{equation}
  \label{eq:score_bounds}
  \begin{split}
\sup_{\xb,t}\|\psi(\xb,t)\|_F \le \bar\psi, \quad \sup_{\xb,t}\|\Sigma(\xb,t)\|_F \le \bar\sigma,\\
\sup_{\xb,t}\bigl\|s^{\mathbf w_{j}}(\xb,t)-s^{\mathbf w_{i}}(\xb,t)\bigr\|
\le \bar s_{ij}.
\end{split}
\end{equation}
Let \(\rho^{\mathbf w_{i}}(\cdot,t)\), \(\rho^{\mathbf w_{j}}(\cdot,t)\) be the marginals of~\eqref{eq:nonlinear_sde} with
\(\rho^{\mathbf w_{i}}(\cdot,0)=\rho^{\mathbf w_{j}}(\cdot,0)=\rho_0\).
Suppose the probability-flow dynamics for \(\mathbf w_i\) are contracting in
\(\mathbf M^{ij}(\xb,t)=(\Theta^{ij})^\top\Theta^{ij}\succ0\) with rate
\(\alpha_{ij}\) and
\(\underline m^{ij}\Ib\preceq \mathbf M^{ij}(\xb,t)\preceq\bar m^{ij}\Ib\).
Set \(\Delta w_{ij}:=\sup_{t\ge0}\|\mathbf w_i(t)-\mathbf w_j(t)\|\) and
\(\kappa^{ij}:=\sqrt{\bar m^{ij}/\underline m^{ij}}\), and define
\begin{equation*}
\label{eq:d_bar_ij}
\bar d^{ij}
:=
\bar\psi\,\Delta w_{ij}
+\tfrac{1}{2}\bar\sigma\,\bar s_{ij}.
\end{equation*}
For all \(t\ge0\), one has
\begin{equation}
\label{eq:detect_upper}
\mathbb W_2\!\left(
\rho^{\mathbf w_{i}}(\cdot,t),\rho^{\mathbf w_{j}}(\cdot,t)
\right)
\le
\kappa^{ij}\bar d^{ij}
\int_0^t e^{\alpha_{ij}(t-r)}\,dr.
\end{equation}
Identifiability with margin \(\epsilon\) in the sense of
Definition~\ref{def:density_reachability} and~\eqref{eq:density_identifiability}
on \([0,T]\) is possible only if
\begin{equation}
\label{eq:detect_cert}
\epsilon
\le
\kappa^{ij}\bar d^{ij}
\int_0^T e^{\alpha_{ij}(T-r)}\,dr.
\end{equation}
If \(\alpha_{ij}<0\), then
\(\mathbb W_2(\rho^{\mathbf w_{i}}_\infty,\rho^{\mathbf w_{j}}_\infty)
\le \kappa^{ij}\bar d^{ij}/|\alpha_{ij}|\).
\end{theorem}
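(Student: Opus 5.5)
By Theorem~\ref{thm:probability_flow_pfo}, each marginal $\rho^{\mathbf w_k}(\cdot,t)$, $k\in\{i,j\}$, is the pushforward of $\rho_0$ under the deterministic probability-flow map $\Phi_{0,t}^{\mathbf v_{\mathbf w_k}}$. The plan is therefore to compare two deterministic flows started from the \emph{same} law. We treat the $\mathbf w_i$ probability flow as the contracting reference system and the $\mathbf w_j$ probability flow as its perturbation, and then apply the argument of Theorem~\ref{thm:wasserstein_contraction_bound} with $\mathbf M^{ij}$, $\alpha_{ij}$, and $\kappa^{ij}$ in place of $\mathbf M$, $\alpha$, and $\kappa$.

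\emph{Step 1: compute the perturbation and bound it uniformly.} By~\eqref{eq:probability_flow_ode} and~\eqref{eq:fault_decomposition_density}, the difference of the two velocity fields is
\[
\mathbf v_{\mathbf w_j}-\mathbf v_{\mathbf w_i}
=\psi(\xb,t)\bigl(\mathbf w_j(t)-\mathbf w_i(t)\bigr)-\tfrac12\Sigma(\xb,t)\bigl(s^{\mathbf w_j}-s^{\mathbf w_i}\bigr)(\xb,t).
\]
The row-divergence terms $\nabla_\xb\cdot\Sigma$ cancel because both flows share the same $\sigma$. Using $\|\mathbf A\mathbf y\|\le\|\mathbf A\|_F\|\mathbf y\|$ together with the bounds~\eqref{eq:score_bounds}, this difference is bounded uniformly in $(\xb,t)$ by $\bar\psi\,\Delta w_{ij}+\tfrac12\bar\sigma\,\bar s_{ij}=\bar d^{ij}$. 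The $\mathbf w_j$ probability flow is thus $\dot\xb=\mathbf v_{\mathbf w_i}(\xb,t)+\mathbf d(\xb,t)$ with $\sup\|\mathbf d\|\le\bar d^{ij}$, which has exactly the form of~\eqref{eq:probability_flow_as_deterministic_perturbation}.

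\emph{Step 2: run the contraction and coupling argument.} We interpolate the two trajectories with the virtual system~\eqref{eq:particular_solutions_virtual_system}, using $\mathbf v_{\mathbf w_i}$ as the nominal field. The differential length estimate of Theorem~\ref{thm:wasserstein_contraction_bound}, followed by Gronwall's inequality and the metric bounds $\underline m^{ij},\bar m^{ij}$, gives the pointwise bound
\[
\|\mathbf X_t-\mathbf Y_t\|\le\kappa^{ij}e^{\alpha_{ij}t}\|\mathbf X_0-\mathbf Y_0\|+\kappa^{ij}\bar d^{ij}\int_0^te^{\alpha_{ij}(t-r)}dr.
\]
We choose the diagonal coupling $\mathbf X_0=\mathbf Y_0\sim\rho_0$, which is admissible since both flows start from $\rho_0$, so the first term vanishes. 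Pushing this coupling forward gives a coupling of $\rho^{\mathbf w_i}(\cdot,t)$ and $\rho^{\mathbf w_j}(\cdot,t)$, and taking $L^2$ norms yields~\eqref{eq:detect_upper}. Equivalently, one can invoke~\eqref{eq:wasserstein_contraction_bound} directly with $\mathbb W_2(\mu_0,\nu_0)=0$.

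\emph{Step 3: read off the corollaries.} If the faults are $\epsilon$-identifiable at the horizon $T$, then~\eqref{eq:density_identifiability} gives $\epsilon\le\mathbb W_2(\rho^{\mathbf w_i}(\cdot,T),\rho^{\mathbf w_j}(\cdot,T))$, and~\eqref{eq:detect_upper} at $t=T$ then gives~\eqref{eq:detect_cert}. If $\alpha_{ij}<0$, then $\int_0^te^{\alpha_{ij}(t-r)}dr=(1-e^{\alpha_{ij}t})/|\alpha_{ij}|\le1/|\alpha_{ij}|$ for all $t$, so the bound holds uniformly in time. Passing to the limiting laws, assuming they exist and using lower semicontinuity of $\mathbb W_2$ under weak convergence, gives the stationary bound $\kappa^{ij}\bar d^{ij}/|\alpha_{ij}|$.

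\emph{Main obstacle.} The delicate point is Step 2. The nominal field $\mathbf v_{\mathbf w_i}$ itself depends on the solution through its score $s^{\mathbf w_i}$, so contraction must be assumed for this time-varying, density-dependent field, as the hypothesis states, rather than for $\mathbf F_{\mathbf w_0}$. The perturbation $\mathbf d$ also depends on $\xb$ and $t$ through both scores. For this reason the virtual-system argument should be applied with $\mathbf d$ evaluated along the $\mathbf Y_t$ trajectory, so that only the uniform bound $\bar d^{ij}$ enters the estimate. We also need the flows $\Phi^{\mathbf v_{\mathbf w_k}}$ to be well defined, which is already assumed in Theorem~\ref{thm:probability_flow_pfo}.
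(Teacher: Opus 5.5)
Your proposal is correct and follows the paper's proof essentially step for step. You write $\mathbf v_{\mathbf w_j}=\mathbf v_{\mathbf w_i}+\mathbf d^{ij}$, bound $\|\mathbf d^{ij}\|\le\bar d^{ij}$ using \eqref{eq:score_bounds}, apply Theorem~\ref{thm:wasserstein_contraction_bound} with nominal field $\mathbf v_{\mathbf w_i}$ and $\mu_0=\nu_0$, and evaluate at $t=T$ to get \eqref{eq:detect_cert}; your explicit treatment of the $\alpha_{ij}<0$ case, via the uniform bound $(1-e^{\alpha_{ij}t})/|\alpha_{ij}|\le 1/|\alpha_{ij}|$ and lower semicontinuity of $\mathbb W_2$ (assuming the limiting laws exist), supplies a step the paper states without proof.
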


\begin{proof}
By~\eqref{eq:probability_flow_ode} and~\eqref{eq:fault_decomposition_density}, one obtains
\(\mathbf v_{\mathbf w_j}=\mathbf v_{\mathbf w_i}+\mathbf d^{ij}\) with
\(\mathbf d^{ij}
=\psi(\xb,t)[\mathbf w_j(t)-\mathbf w_i(t)]
+\tfrac{1}{2}\Sigma(\xb,t)\bigl[
s^{\mathbf w_{i}}(\xb,t)-s^{\mathbf w_{j}}(\xb,t)\bigr]\).
Inequalities~\eqref{eq:score_bounds} imply
\(\|\mathbf d^{ij}(\xb,t)\|\le\bar d^{ij}\).
Applying Theorem~\ref{thm:wasserstein_contraction_bound} with nominal
\(\mathbf v_{\mathbf w_i}\), perturbed \(\mathbf v_{\mathbf w_j}\),
\(\mu_0=\nu_0\), \(\kappa=\kappa^{ij}\), \(\alpha=\alpha_{ij}\),
\(\bar d=\bar d^{ij}\) yields~\eqref{eq:detect_upper}.
If \(\mathbb W_2(\rho^{\mathbf w_{i}}(\cdot,T),\rho^{\mathbf w_{j}}(\cdot,T))\ge\epsilon\),
we obtain~\eqref{eq:detect_cert}, which follows from~\eqref{eq:detect_upper} at \(t=T\).
\end{proof}

The score-difference bound $\overline{s}_{ij}$ can be estimated empirically from trajectory data by fitting score networks to the fault-indexed marginals during offline training, making assumption~\eqref{eq:score_bounds} verifiable in practice.

\section{Fault-Indexed PF Operator Matching}
The contraction and detectability results of the previous section assume access to the exact fault-indexed PFO $\mathscr P_{s,t}^{\mathbf w}$. In practice, this operator is not available in closed-form. Hence, considering the offline trajectory data $\{(\xb_{s_i}^{(i)},\xb_{t_i}^{(i)},\mathbf w^{(i)})\}_{i=1}^N$ collected under each fault profile, our objective is to learn a parameterized flow map whose pushforward approximates $\mathscr P_{s,t}^{\mathbf w}$ for all $\mathbf w\in\mathscr F$. This has been achieved using flow map matching (FMM).

\label{sec:operator_learning}
For fixed \(\mathbf w\in\mathscr F\) and physical endpoints \(s<t\), let \(\rho_s^{\mathbf w}\) and \(\rho_t^{\mathbf w}\) denote the corresponding marginals. Following \cite{albergo2023building,albergo2023stochastic}, define the stochastic interpolant
\(I_\tau^{\mathbf w}
=
\alpha_\tau \xb_s+\beta_\tau \xb_t+\gamma_\tau \zb\), for \(\tau\in[0,1]\),
where \((\xb_s,\xb_t)\sim\pi_{s,t}^{\mathbf w} \in\Pi(\rho_s^{\mathbf w},\rho_t^{\mathbf w})\), \(\zb\sim\CMcal N(0,\Ib)\) is independent of \((\xb_s,\xb_t)\), and \(\alpha,\beta,\gamma\in C^1([0,1])\) satisfy the standard endpoint conditions. The pathwise velocity $\dot I_\tau^{\mathbf w}=\dot\alpha_\tau \xb_s+\dot\beta_\tau \xb_t+\dot\gamma_\tau \zb
$ is directly observable from data pairs \((\xb_s,\xb_t)\). Let \(\rho_\tau^{\mathbf w}\) denote the density of \(I_\tau^{\mathbf w}\).

Under standard regularity assumptions, the density curve \((\rho_\tau^{\mathbf w})_{\tau\in[0,1]}\) is reproduced by the auxiliary-time probability flow
\begin{equation}
\label{eq:fault_probability_flow}
\begin{aligned}
\frac{d}{d\tau}\xb(\tau)
&=
\mathbf b_\tau^{\mathbf w}(\xb(\tau)),
\\
\mathbf b_\tau^{\mathbf w}:\mathscr X\to\R^n,
&\
\mathbf b_\tau^{\mathbf w}(\xb)
:=
\mathbb E\!\left[\dot I_\tau^{\mathbf w}\mid I_\tau^{\mathbf w}=\xb\right].
\end{aligned}
\end{equation}
For each fixed \(0\le\tau_0\le\tau_1\le1\), let $\Phi_{\tau_0,\tau_1}^{\mathbf b^{\mathbf w}}:\mathscr X\to\mathscr X$ denote the corresponding two-time flow map. Its terminal map induces the exact fault-indexed PF operator
\(
\mathscr P_{s,t}^{\mathbf w}:\mathcal D(\mathscr X)\to\mathcal D(\mathscr X)\), where
\(\mathscr P_{s,t}^{\mathbf w}\rho
:=
\Bigl(\Phi_{0,1}^{\mathbf b^{\mathbf w}}\Bigr)_{\!\#}\rho.
\)
We will use $\mathscr P_{s,t}^{\mathbf w}$ for notational convenience whenever dependence on the probability flow is understood. Let us define the interpolant path map
\begin{equation*}
\label{eq:interpolant_path_map}
\Psi_{\tau_0,\tau_1}^{\mathbf w}:
\operatorname{supp}(\rho_{\tau_0}^{\mathbf w})\to\mathscr X,
\qquad
\Psi_{\tau_0,\tau_1}^{\mathbf w}\!\left(I_{\tau_0}^{\mathbf w}\right)
:=
I_{\tau_1}^{\mathbf w}
\qquad \text{a.s.}
\end{equation*}
The ODE flow \(\Phi^{\mathbf b^{\mathbf w}}\) propagates marginals and \(\Psi^{\mathbf w}\) is the samplewise transport observable from trajectory pairs. The FMM~\cite{boffi2024flow} supervises this pathwise object and uses the learned terminal map to induce a density operator.

The offline dataset consists of stochastic trajectory pairs $\{
(\xb_{s_i}^{(i)},\xb_{t_i}^{(i)},s_i,t_i,\mathbf w^{(i)})
\}_{i=1}^N$, for $\mathbf w^{(i)}\in\mathscr F$. We parameterize a learned fault-indexed two-time map
\(
\widehat\Phi_{\psi,\tau_0,\tau_1}^{\mathbf b^{\mathbf w}}:\mathscr X\to\mathscr X\),
for \(\widehat\Phi_{\psi,\tau,\tau}^{\mathbf b^{\mathbf w}}(\xb)=\xb,
\)
parameterized jointly by \((\tau_0,\tau_1,s,t,\mathbf w)\). Following the direct flow-map matching formulation of~\cite{boffi2024flow}, the direct FMM loss $\mathcal L_{\mathrm{FMM}}(\psi)$ can be defined as
\begin{equation*}
\label{eq:fault_fmm_loss}
\small
\begin{aligned}
\underset{\mathbf w, (s,t)}{\mathbb E} 
&\underset{[0,1]^2}{\int}
\omega(\tau_0,\tau_1)\,
\mathbb E\Big[
\big\|
\partial_{\tau_1}
\widehat\Phi_{\psi,\tau_0,\tau_1}^{\mathbf b^{\mathbf w}}
\big(
\widehat\Phi_{\psi,\tau_1,\tau_0}^{\mathbf b^{\mathbf w}}
(I_{\tau_1}^{\mathbf w})
\big)
-
\dot I_{\tau_1}^{\mathbf w}
\big\|^2
\\
&+
\big\|
\widehat\Phi_{\psi,\tau_0,\tau_1}^{\mathbf b^{\mathbf w}}
\big(
\widehat\Phi_{\psi,\tau_1,\tau_0}^{\mathbf b^{\mathbf w}}
(I_{\tau_1}^{\mathbf w})
\big)
-
I_{\tau_1}^{\mathbf w}
\big\|^2
\Big]
\,d\tau_0\,d\tau_1 ,
\end{aligned}
\end{equation*}
where \(\omega:[0,1]^2\to\R_{\ge0}\) has full support on \(\{(\tau_0,\tau_1):0\le~\tau_0\le\tau_1\le 1\}\). Supervision uses only observable interpolant velocities, and \(\mathbf b_\tau^{\mathbf w}\) is the theoretical generator. To enforce the semigroup structure, we add
\begin{equation*}
\label{eq:fault_semigroup_loss}
\small
\mathcal L_{\mathrm{sg}}(\psi)
:=
\mathbb E\!\left[
\left\|
\widehat\Phi_{\psi,\tau_0,\tau_2}^{\mathbf b^{\mathbf w}}(I_{\tau_0}^{\mathbf w})
-
\widehat\Phi_{\psi,\tau_1,\tau_2}^{\mathbf b^{\mathbf w}}
\!\left(
\widehat\Phi_{\psi,\tau_0,\tau_1}^{\mathbf b^{\mathbf w}}
(I_{\tau_0}^{\mathbf w})
\right)
\right\|^2
\right].
\end{equation*}
The terminal-path restriction of the FMM loss is also included in the following form:
\begin{equation}
\label{eq:fault_endpoint_loss}
\small
\mathcal L_{\mathrm{ep}}(\psi)
:=
\mathbb E_{\mathbf w,(s,t)}
\int_0^1
\omega_0(r)\,
\mathbb E\!\left[
\left\|
\partial_r\widehat\Phi_{\psi,0,r}^{\mathbf b^{\mathbf w}}(\xb_s)
-\dot I_r^{\mathbf w}
\right\|^2
\right]dr,
\end{equation}
where \((\xb_s,\xb_t)\sim\pi_{s,t}^{\mathbf w}\), \(I_0^{\mathbf w}=\xb_s\),
\(I_1^{\mathbf w}=\xb_t\), and \(\omega_0:[0,1]\to\R_{>0}\). One can then optimize
\(
\min_\psi\;
\mathcal L_{\mathrm{FMM}}(\psi)
+\lambda_{\mathrm{ep}}\mathcal L_{\mathrm{ep}}(\psi)
+\lambda_{\mathrm{sg}}\mathcal L_{\mathrm{sg}}(\psi).
\)
The semigroup term enforces two-time consistency. The endpoint term supervises the deployed terminal map \(\widehat\Phi_{\psi,0,1}^{\mathbf b^{\mathbf w}}\).

Under the hypotheses on \(\omega\) and on
\(\tau\mapsto\widehat\Phi_{\tau_0,\tau}^{\mathbf b^{\mathbf w}}
(I_{\tau_0}^{\mathbf w})\) stated in~\cite{boffi2024flow}, vanishing direct FMM loss \(\mathcal L_{\mathrm{FMM}}(\widehat\Phi)=0\) implies that the learned two-time maps agree with the interpolant path. In particular, \(\widehat\Phi_{\tau_0,\tau_1}^{\mathbf b^{\mathbf w}}
(I_{\tau_0}^{\mathbf w})=I_{\tau_1}^{\mathbf w}\) almost surely for
\(\omega\)-almost every \((\tau_0,\tau_1)\), and hence, \(\widehat\Phi_{0,1}^{\mathbf b^{\mathbf w}}(\xb_s)=\xb_t\) almost surely under \(\pi_{s,t}^{\mathbf w}\).

The learned terminal map induces $\widehat{\mathscr P}_{s,t}^{\mathbf w}:\mathcal D(\mathscr X)\to\mathcal D(\mathscr X)$, where $\widehat{\mathscr P}_{s,t}^{\mathbf w}\rho:=\left(\widehat\Phi_{\psi,0,1}^{\mathbf b^{\mathbf w}} \right)_{\!\#}\rho$. Because \(\widehat\Phi_{\psi,\tau_0,\tau_1}^{\mathbf b^{\mathbf w}}\) is parameterized jointly by \((\tau_0,\tau_1,s,t,\mathbf w)\), the learned PF-operator family defines a single conditional model in \(\mathbf w\), rather than a separate operator for each library element. Consequently, \(\widehat{\mathscr P}_{s,t}^{\mathbf w}\) can be evaluated for any continuous parameter set \(\mathbf w\in\mathcal W\), where \(\mathcal W\subseteq \mathbb{R}^p\) denotes a continuous fault-parameter set containing the discrete library \(\mathscr{F}\), beyond \(\mathscr F\).
\begin{lemma}[Induced PF Operator]
\label{lem:fault_flowmap_learns_pfo}
Fix \(\mathbf w\in\mathscr F\) and \(0\le s\le t\le T\). Set \(\widehat\Phi:=\widehat\Phi_{\psi,0,1}^{\mathbf b^{\mathbf w}}\). Assume \(\widehat\Phi\) is a \(C^1\)-diffeomorphism. Consequently, \(\widehat{\mathscr P}_{s,t}^{\mathbf w}\) is a valid density pushforward and
\(
\bigl(\widehat{\mathscr P}_{s,t}^{\mathbf w}\rho\bigr)(\xb)
=
\rho \bigl(
\widehat\Phi^{-1}(\xb)
\bigr)
\bigl|\det D\widehat\Phi^{-1}(\xb)\bigr|.
\)
If
\(\widehat\Phi=\Phi_{0,1}^{\mathbf b^{\mathbf w}}\), one has
\(
\widehat{\mathscr P}_{s,t}^{\mathbf w}
=
\mathscr P_{s,t}^{\mathbf w}\) and
\(\widehat{\mathscr P}_{s,t}^{\mathbf w}\rho_s^{\mathbf w}
=
\rho_t^{\mathbf w}.
\)
\end{lemma}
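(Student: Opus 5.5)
The lemma has three parts: the pushforward is a valid density operator, it has the change-of-variables formula, and it is exact when the learned map is exact. I will prove them in that order.

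\textbf{Validity and the density formula.} Let \(\rho\in\mathcal D(\mathscr X)\) with measure \(\mu(d\xb)=\rho(\xb)\,d\xb\). By definition, \(\widehat{\mathscr P}_{s,t}^{\mathbf w}\rho\) is the Radon--Nikodym derivative of \(\widehat\Phi_{\#}\mu\), exactly as in Definition~\ref{def:pfo} with \(\Phi_{s,t}^{\mathbf F_{\mathbf w}}\) replaced by \(\widehat\Phi\).

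Take a bounded measurable test function \(\varphi\). Since \(\widehat\Phi\) is a \(C^1\)-diffeomorphism, I substitute \(\yb=\widehat\Phi(\xb)\):
\[
\int\varphi\circ\widehat\Phi\,\rho\,d\xb=\int\varphi(\yb)\,\rho\bigl(\widehat\Phi^{-1}(\yb)\bigr)\bigl|\det D\widehat\Phi^{-1}(\yb)\bigr|\,d\yb .
\]
This shows that \(\widehat\Phi_\#\mu\) is absolutely continuous with respect to Lebesgue measure. Its density is the stated expression, which is nonnegative, and taking \(\varphi\equiv1\) shows it has unit mass.

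\textbf{Finite second moment.} It remains to show the image density lies in \(\mathcal D(\mathscr X)\), which needs \(\int\|\widehat\Phi(\xb)\|^2\rho(\xb)\,d\xb<\infty\). This is the one nonroutine point, because a general diffeomorphism can destroy finite second moments. I would add a mild growth assumption, namely a linear bound \(\|\widehat\Phi(\xb)\|\le C(1+\|\xb\|)\) (e.g.\ a globally Lipschitz network). Then \(\rho\in\mathcal D(\mathscr X)\) gives a finite second moment for the image. I would state this explicitly as implicit in calling the map ``valid on \(\mathcal D(\mathscr X)\)''.

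\textbf{Exactness.} If \(\widehat\Phi=\Phi_{0,1}^{\mathbf b^{\mathbf w}}\), then the two pushforwards coincide by the definition \(\mathscr P_{s,t}^{\mathbf w}\rho:=(\Phi_{0,1}^{\mathbf b^{\mathbf w}})_\#\rho\). For the identity \(\widehat{\mathscr P}_{s,t}^{\mathbf w}\rho_s^{\mathbf w}=\rho_t^{\mathbf w}\), I would argue as in Theorem~\ref{thm:probability_flow_pfo}:
\begin{itemize}
\item The interpolant density curve \(\rho_\tau^{\mathbf w}\) solves the continuity equation \(\partial_\tau\rho_\tau+\nabla\cdot(\mathbf b_\tau^{\mathbf w}\rho_\tau)=0\). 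This is verified weakly by differentiating \(\mathbb E[\varphi(I_\tau^{\mathbf w})]\) and conditioning on \(I_\tau^{\mathbf w}\), which yields \(\mathbb E[\nabla\varphi(I_\tau^{\mathbf w})\cdot\mathbf b_\tau^{\mathbf w}(I_\tau^{\mathbf w})]\).
\item By uniqueness for the transport equation~\cite{ambrosio2014continuity} under the standard regularity assumptions, \(\rho_1^{\mathbf w}=(\Phi_{0,1}^{\mathbf b^{\mathbf w}})_\#\rho_0^{\mathbf w}\).
\item The endpoint conditions give \(I_0^{\mathbf w}=\xb_s\) and \(I_1^{\mathbf w}=\xb_t\), hence \(\rho_0^{\mathbf w}=\rho_s^{\mathbf w}\) and \(\rho_1^{\mathbf w}=\rho_t^{\mathbf w}\), which proves the claim.
\end{itemize}
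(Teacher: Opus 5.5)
Your proof is correct and follows the same route as the paper's: change of variables for the density formula, the definitional identity $\widehat{\mathscr P}_{s,t}^{\mathbf w}=\mathscr P_{s,t}^{\mathbf w}$ when the terminal maps coincide, and the fact that the exact auxiliary flow carries $\rho_0^{\mathbf w}=\rho_s^{\mathbf w}$ to $\rho_1^{\mathbf w}=\rho_t^{\mathbf w}$ (the paper takes this from its ``standard regularity'' remark; you justify it via the interpolant continuity equation plus uniqueness). Your finite-second-moment caveat is a genuine point the paper's proof omits: a $C^1$-diffeomorphism alone does not map $\mathcal D(\mathscr X)$ into itself, so a linear-growth (e.g.\ Lipschitz) assumption on $\widehat\Phi$ is needed if ``valid'' means landing in $\mathcal D(\mathscr X)$.
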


\begin{proof}
The first claim is the change-of-variables formula for the pushforward of densities by a \(C^1\)-diffeomorphism. If the learned terminal map equals the exact terminal map, the induced pushforwards coincide pointwise on every input density. Applying this identity to \(\rho_s^{\mathbf w}\) yields \(\widehat{\mathscr P}_{s,t}^{\mathbf w}\rho_s^{\mathbf w} = \rho_t^{\mathbf w}\).
\end{proof}

The next theorem converts the observable endpoint FMM residual into a \(\mathbb W_2\)-bound for the deployed operator.

\begin{theorem}[FMM Residual Controls PFO Error]
\label{thm:finite_error_operator_learning}
Fix \(\mathbf w\in\mathscr F\) and \(0\le s\le t\le T\). Write $\widehat\Phi_r:=\widehat\Phi_{\psi,0,r}^{\mathbf b^{\mathbf w}}$ with $
(X_s,X_t)\sim\pi_{s,t}^{\mathbf w}$ and let \(I_r^{\mathbf w}\) be the associated stochastic interpolant, so \(I_0^{\mathbf w}=X_s\) and \(I_1^{\mathbf w}=X_t\). Assume \(r\mapsto \widehat\Phi_r(X_s)\) is absolutely continuous and \(\widehat\Phi_0=\mathrm{Id}\). Define the endpoint residual
\begin{equation*}
\label{eq:endpoint_fmm_residual}
\begin{aligned}
R_r^{\mathbf w}
:=
\partial_r\widehat\Phi_r(X_s)-\dot I_r^{\mathbf w}, \
\mathcal E_{\mathrm{ep}}^{\mathbf w}(\psi)
:=
\int_0^1
\omega_0(r)\,
\mathbb E[\|R_r^{\mathbf w}\|^2]\,dr,
\end{aligned}
\end{equation*}
where \(\omega_0:[0,1]\to(0,\infty)\) is the terminal-path restriction sampling density from~\eqref{eq:fault_endpoint_loss}. Thus,
\begin{equation}
\label{eq:endpoint_residual_identity}
\widehat\Phi_{\psi,0,1}^{\mathbf b^{\mathbf w}}(X_s)-X_t
=
\int_0^1 R_r^{\mathbf w}\,dr
\qquad\text{a.s.}
\end{equation}
and therefore,
\begin{equation}
\label{eq:finite_error_operator_learning}
\mathbb W_2\!\left(
\widehat{\mathscr P}_{s,t}^{\mathbf w}\rho_s^{\mathbf w},\;
\mathscr P_{s,t}^{\mathbf w}\rho_s^{\mathbf w}
\right)
\le
\left(
\mathbb E\left[
\bigl\|
\widehat\Phi_{\psi,0,1}^{\mathbf b^{\mathbf w}}(X_s)-X_t 
\bigr\|^2
\right]
\right)^{1/2}.
\end{equation}
Moreover,
\[
\mathbb E\left[
\bigl\|
\widehat\Phi_{\psi,0,1}^{\mathbf b^{\mathbf w}}(X_s)-X_t
\bigr\|^2
\right]
\le
\int_0^1 \mathbb E[\|R_r^{\mathbf w}\|^2]\,dr.
\]
If \(\omega_0(r)\ge \underline\omega_0>0\) on \([0,1]\), one has
\begin{equation}
\label{eq:fmm_residual_operator_bound}
\mathbb W_2\!\left(
\widehat{\mathscr P}_{s,t}^{\mathbf w}\rho_s^{\mathbf w},\;
\mathscr P_{s,t}^{\mathbf w}\rho_s^{\mathbf w}
\right)
\le
\underline\omega_0^{-1/2}
\bigl(\mathcal E_{\mathrm{ep}}^{\mathbf w}(\psi)\bigr)^{1/2}.
\end{equation}
\end{theorem}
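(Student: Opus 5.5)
The plan is to prove the four claims in order: the endpoint identity, the coupling bound, the Jensen step, and the weighted version. Each step uses only the fundamental theorem of calculus for absolutely continuous paths, the coupling definition of $\mathbb W_2$ (Definition~\ref{def:w2}), and Cauchy--Schwarz.

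\textbf{Endpoint identity.} Since $r\mapsto\widehat\Phi_r(X_s)$ is absolutely continuous with $\widehat\Phi_0(X_s)=X_s$, the fundamental theorem of calculus gives $\widehat\Phi_1(X_s)-X_s=\int_0^1\partial_r\widehat\Phi_r(X_s)\,dr$ almost surely. The interpolant is $C^1$ in $r$ because $\alpha,\beta,\gamma\in C^1([0,1])$, and its endpoint conditions give $I_0^{\mathbf w}=X_s$ and $I_1^{\mathbf w}=X_t$. Hence $X_t-X_s=\int_0^1\dot I_r^{\mathbf w}\,dr$. Subtracting the two identities yields~\eqref{eq:endpoint_residual_identity}.

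\textbf{Coupling bound.} Consider the random pair $(\widehat\Phi_{\psi,0,1}^{\mathbf b^{\mathbf w}}(X_s),X_t)$.
\begin{itemize}
\item Its first marginal is $(\widehat\Phi_{\psi,0,1}^{\mathbf b^{\mathbf w}})_\#\rho_s^{\mathbf w}=\widehat{\mathscr P}_{s,t}^{\mathbf w}\rho_s^{\mathbf w}$, because $X_s\sim\rho_s^{\mathbf w}$.
\item Its second marginal is $\rho_t^{\mathbf w}$, because $X_t\sim\rho_t^{\mathbf w}$. By the construction of the exact operator (the auxiliary-time probability flow~\eqref{eq:fault_probability_flow} reproduces the interpolant marginals), $\rho_t^{\mathbf w}=\mathscr P_{s,t}^{\mathbf w}\rho_s^{\mathbf w}$.
\end{itemize}
So the law of this pair lies in $\Pi(\widehat{\mathscr P}_{s,t}^{\mathbf w}\rho_s^{\mathbf w},\mathscr P_{s,t}^{\mathbf w}\rho_s^{\mathbf w})$. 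Bounding the infimum in the definition of $\mathbb W_2$ by this particular coupling gives~\eqref{eq:finite_error_operator_learning}.

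\textbf{Jensen step.} Apply Cauchy--Schwarz (or Jensen) on the unit interval: $\|\int_0^1R_r^{\mathbf w}dr\|^2\le\int_0^1\|R_r^{\mathbf w}\|^2dr$ pointwise. Taking expectations and using Tonelli to exchange $\mathbb E$ and $\int dr$ (the integrand is nonnegative) gives the second-moment inequality.

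\textbf{Weighted bound.} If $\omega_0\ge\underline\omega_0>0$, then $\int_0^1\mathbb E\|R_r^{\mathbf w}\|^2dr\le\underline\omega_0^{-1}\mathcal E_{\mathrm{ep}}^{\mathbf w}(\psi)$. Chaining this with the previous two steps and taking square roots gives~\eqref{eq:fmm_residual_operator_bound}.

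\textbf{Main obstacle.} The delicate step is the identification $\rho_t^{\mathbf w}=\mathscr P_{s,t}^{\mathbf w}\rho_s^{\mathbf w}$, i.e., that $(\Phi_{0,1}^{\mathbf b^{\mathbf w}})_\#\rho_0^{\mathbf w}=\rho_1^{\mathbf w}$. I would justify it from the standard stochastic-interpolant fact~\cite{albergo2023stochastic}: $\rho_\tau^{\mathbf w}$ solves the continuity equation with velocity $\mathbf b_\tau^{\mathbf w}$, and uniqueness of solutions of that transport equation~\cite{ambrosio2014continuity} then transfers the endpoint law to the flow-map pushforward. This is the same argument as in Theorem~\ref{thm:probability_flow_pfo}, and it holds under the ``standard regularity assumptions'' already invoked in the text. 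A secondary technical point is integrability: if $\int_0^1\mathbb E\|R_r^{\mathbf w}\|^2dr$ is infinite, the bound holds trivially; otherwise $R^{\mathbf w}\in L^2$, which justifies the a.s.\ integral identity.
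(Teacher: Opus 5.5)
Your proposal is correct and follows essentially the same route as the paper: the fundamental theorem of calculus gives the endpoint identity, the coupling $\operatorname{Law}(\widehat\Phi_{\psi,0,1}^{\mathbf b^{\mathbf w}}(X_s),X_t)$ gives the $\mathbb W_2$ bound, Jensen on $[0,1]$ gives the second-moment inequality, and the lower bound on $\omega_0$ gives the weighted estimate; the paper merely does the Jensen step before the coupling step. Your justification of $\mathscr P_{s,t}^{\mathbf w}\rho_s^{\mathbf w}=\rho_t^{\mathbf w}$ via the interpolant continuity equation and transport uniqueness covers a point the paper only asserts, so it adds care without changing the argument.
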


\begin{proof}
Since \(I_0^{\mathbf w}=X_s\), \(I_1^{\mathbf w}=X_t\), and \(\widehat\Phi_0(X_s)=X_s\), the fundamental theorem of calculus gives
\[
\widehat\Phi_1(X_s)-X_t
=
\int_0^1 \partial_r \widehat\Phi_r(X_s)\,dr
-\int_0^1 \dot I_r^{\mathbf w}\,dr
=
\int_0^1 R_r^{\mathbf w}\,dr,
\]
which is~\eqref{eq:endpoint_residual_identity}. Jensen's inequality then implies
\[
\mathbb E\left[\bigl\|
\widehat\Phi_1(X_s)-X_t
\bigr\|^2\right]
\le
\int_0^1 \mathbb E[\|R_r^{\mathbf w}\|^2]\,dr.
\]
Now set $\widehat X_t := \widehat\Phi_{\psi,0,1}^{\mathbf b^{\mathbf w}}(X_s)$. Then
\(\operatorname{Law}(\widehat X_t)=
\widehat{\mathscr P}_{s,t}^{\mathbf w}\rho_s^{\mathbf w}\)
and
\(\operatorname{Law}(X_t)=
\mathscr P_{s,t}^{\mathbf w}\rho_s^{\mathbf w}
=\rho_t^{\mathbf w}\),
so \(\operatorname{Law}(\widehat X_t,X_t)\) is a coupling of the learned and true terminal densities. By definition of \(\mathbb W_2\),
\[
\mathbb W_2^2\!\left(
\widehat{\mathscr P}_{s,t}^{\mathbf w}\rho_s^{\mathbf w},\;
\mathscr P_{s,t}^{\mathbf w}\rho_s^{\mathbf w}
\right)
\le
\mathbb E \left[\bigl\|
\widehat\Phi_{\psi,0,1}^{\mathbf b^{\mathbf w}}(X_s)-X_t
\bigr\|^2\right].
\]
Taking square roots yields~\eqref{eq:finite_error_operator_learning}. If \(\omega_0(r)\ge \underline\omega_0\), then
\begin{equation*}
\small
\begin{aligned}
\int_0^1 \mathbb E[\|R_r^{\mathbf w}\|^2]\,dr
\le
\underline\omega_0^{-1} \!
\int_0^1
\!\omega_0(r)\,
\mathbb E[\|R_r^{\mathbf w}\|^2]\,dr
=
\underline\omega_0^{-1}\mathcal E_{\mathrm{ep}}^{\mathbf w}(\psi),
\end{aligned}
\end{equation*}
which proves~\eqref{eq:fmm_residual_operator_bound}.
\end{proof}

Theorem~\ref{thm:finite_error_operator_learning} is the static training-marginal
counterpart to the long-horizon contraction bounds in Theorem~\ref{thm:wasserstein_contraction_bound} and Section~\ref{sec:contraction_learning}. It bounds one-shot operator mismatch from
\(\mathcal E_{\mathrm{ep}}^{\mathbf w}\).
Contraction regularization controls propagation under general input densities.

\section{Contractive Operator Learning}
\label{sec:contraction_learning}
Theorem~\ref{thm:finite_error_operator_learning} bounds operator error from the
FMM residual but not long-horizon stability.
We regularize the learned flow map
\(\widehat\Phi_{\psi,\tau_0,\tau_1}^{\mathbf b^{\mathbf w}}\) with a
contraction
certificate~\cite{Lohmiller1998,manchester2017,tsukamoto2021contraction}. Its
induced velocity field is defined by
\(
\partial_{\tau_1}
\widehat\Phi_{\psi,\tau_0,\tau_1}^{\mathbf b^{\mathbf w}}(\xb)
=
\widehat{\mathbf b}_{\psi,\tau_1}^{\mathbf w}
\!\left(
\widehat\Phi_{\psi,\tau_0,\tau_1}^{\mathbf b^{\mathbf w}}(\xb)
\right).
\)
Let \(\vartheta\) parameterize a metric network and set
\(
\mathbf M_{\vartheta}^{\mathbf w}(\xb,\tau)
=
\bigl(\Theta_{\vartheta}^{\mathbf w}(\xb,\tau)\bigr)^\top
\Theta_{\vartheta}^{\mathbf w}(\xb,\tau)\succ0,
\)
with optimized rate \(\alpha^{\mathbf w}\in\R\). Let us define the contraction
residual
\begin{equation}
\label{eq:fault_contraction_residual}
\mathcal R_{\mathrm{ctr}}^{\mathbf w}
:=
\dot{\mathbf M}_{\vartheta}^{\mathbf w}
\!+\bigl(D\widehat{\mathbf b}_{\psi,\tau}^{\mathbf w}\bigr)^\top
\mathbf M_{\vartheta}^{\mathbf w}
\!+\mathbf M_{\vartheta}^{\mathbf w}
D\widehat{\mathbf b}_{\psi,\tau}^{\mathbf w}
\!-2\alpha^{\mathbf w}\mathbf M_{\vartheta}^{\mathbf w},
\end{equation}
where \([\cdot]_+\) denotes projection onto the cone of symmetric positive
semidefinite matrices. The contraction and certificate losses are
\begin{align*}
\mathcal L_{\mathrm{ctr}}(\psi,\vartheta,\alpha)
&:=
\mathbb E\!\left[
\left\|
\left[
\mathcal R_{\mathrm{ctr}}^{\mathbf w}(\xb,\tau)
\right]_+
\right\|_F^2
\right],\\
\mathcal L_{\mathrm{cert}}(\vartheta,\alpha)
&:=
\mathbb E_{\mathbf w}\!\left[
c_\alpha[\alpha^{\mathbf w}]_+^2
+c_\kappa
\bigl(
\log\overline m^{\mathbf w}
-\log\underline m^{\mathbf w}
\bigr)
\right],
\end{align*}
with uniform metric bounds
\(
\underline m^{\mathbf w}\Ib
\preceq
\mathbf M_{\vartheta}^{\mathbf w}
\preceq
\overline m^{\mathbf w}\Ib
\).
The joint objective is
\begin{equation}
\label{eq:full_obj}
\begin{aligned}
\min_{\psi,\vartheta,\alpha}\;\;
&\mathcal L_{\mathrm{FMM}}(\psi)
+\lambda_{\mathrm{ep}}\mathcal L_{\mathrm{ep}}(\psi)
+\lambda_{\mathrm{sg}}\mathcal L_{\mathrm{sg}}(\psi) \\
&+\lambda_{\mathrm{ctr}}\mathcal L_{\mathrm{ctr}}(\psi,\vartheta,\alpha)
+\lambda_{\mathrm{cert}}\mathcal L_{\mathrm{cert}}(\vartheta,\alpha).
\end{aligned}
\end{equation}
The weights \(\lambda_{\mathrm{ep}},\lambda_{\mathrm{sg}},\lambda_{\mathrm{ctr}},\lambda_{\mathrm{cert}}>0\)
balance five terms. The
\(\mathcal L_{\mathrm{ep}}\) supervises the terminal map and feeds
Theorem~\ref{thm:finite_error_operator_learning} and
\(\mathcal L_{\mathrm{ctr}}\) drives the contraction residual to zero. Moreover,
\(\mathcal L_{\mathrm{cert}}\) penalizes \(e^{\alpha^{\mathbf w}}\) and
\(\kappa^{\mathbf w}=\sqrt{\overline m^{\mathbf w}/\underline m^{\mathbf w}}\)
and shrinks the bias \(\delta^{\mathbf w}\) in~\eqref{eq:delta_w}.

One can write the exact auxiliary field as a perturbation of the learned one, such that
\(\mathbf b_{\tau}^{\mathbf w}(\xb)
=
\widehat{\mathbf b}_{\psi,\tau}^{\mathbf w}(\xb)
+\mathbf d_{\tau}^{\mathbf w}(\xb).
\)

\begin{theorem}[Approximate Wasserstein bound]
\label{thm:approx_ctr}
Fix \(\mathbf w\in\mathscr F\). Set
\(\widehat\Phi_\tau:=\widehat\Phi_{\psi,0,\tau}^{\mathbf b^{\mathbf w}}\), and
assume \(\tau\mapsto\widehat\Phi_\tau(\xb)\) is \(C^1\) for each \(\xb\),
\(\mathcal R_{\mathrm{ctr}}^{\mathbf w}\) is continuous,
\(\rho_\tau^{\mathbf w}\) has full support on \(\mathscr X\), and
\[
\underline m^{\mathbf w}\Ib
\preceq
\mathbf M_{\vartheta}^{\mathbf w}(\xb,\tau)
\preceq
\overline m^{\mathbf w}\Ib
\qquad
\forall (\xb,\tau)\in\mathscr X\times[0,1].
\]
Let
\begin{equation}
\label{eq:eps_ctr}
\varepsilon_{\mathrm{ctr}}^{\mathbf w}
:=
\sup_{(\xb,\tau)\in\mathscr X\times[0,1]}
\bigl\|\bigl[\mathcal R_{\mathrm{ctr}}^{\mathbf w}(\xb,\tau)\bigr]_+\bigr\|_F
\ge0,
\end{equation}
and define
\begin{equation}
\label{eq:alpha_tilde}
\tilde\alpha^{\mathbf w}
:=
\alpha^{\mathbf w}
+\frac{\varepsilon_{\mathrm{ctr}}^{\mathbf w}}{2\,\underline m^{\mathbf w}}.
\end{equation}
Suppose \(\mathbf b_\tau^{\mathbf w}(\xb)=\widehat{\mathbf b}_{\psi,\tau}^{\mathbf w}(\xb)+\mathbf d_\tau^{\mathbf w}(\xb)\) with
\(\overline{d}^{\mathbf w}:=\sup_{\xb,\tau}\|\mathbf d_\tau^{\mathbf w}(\xb)\|<\infty\).
Then, along the auxiliary flow,
\[
\dot{\mathbf M}_{\vartheta}^{\mathbf w}
+2\operatorname{sym}\!\left(
\mathbf M_{\vartheta}^{\mathbf w}
D\widehat{\mathbf b}_{\psi,\tau}^{\mathbf w}
\right)
\preceq
2\tilde\alpha^{\mathbf w}\mathbf M_{\vartheta}^{\mathbf w},
\]
and with \(\kappa^{\mathbf w}:=\sqrt{\overline m^{\mathbf w}/\underline m^{\mathbf w}}\) and
\(
\tilde\delta_\tau^{\mathbf w}
:=
\kappa^{\mathbf w}\overline{d}^{\mathbf w}
\int_0^\tau e^{\tilde\alpha^{\mathbf w}(\tau-r)}\,dr
\),
Theorem~\ref{thm:wasserstein_contraction_bound} applies with rate
\(\tilde\alpha^{\mathbf w}\) and perturbation level \(\overline{d}^{\mathbf w}\).
Thus, for any \(\mu_0,\nu_0\in\mathcal P_2(\mathscr X)\) and \(\tau\in[0,1]\),
\begin{equation}
\label{eq:flow_map_bound}
\begin{aligned}
\mathbb W_2 \left(
(\widehat\Phi_{\psi,0,\tau}^{\mathbf b^{\mathbf w}})_{\!\#}\mu_0,\;
(\Phi_{0,\tau}^{\mathbf b^{\mathbf w}})_{\!\#}\nu_0
\right)
\le{}&
\kappa^{\mathbf w}e^{\tilde\alpha^{\mathbf w}\tau}
\mathbb W_2(\mu_0,\nu_0)
+\tilde\delta_\tau^{\mathbf w}.
\end{aligned}
\end{equation}
In particular, for any \(\rho_s,\eta_s\in\mathcal D(\mathscr X)\), one has
\begin{equation}
\label{eq:operator_bound}
\mathbb W_2\!\left(
\widehat{\mathscr P}_{s,t}^{\mathbf w}\rho_s,\;
\mathscr P_{s,t}^{\mathbf w}\eta_s
\right)
\le
\kappa^{\mathbf w}e^{\tilde\alpha^{\mathbf w}}
\mathbb W_2(\rho_s,\eta_s)
+\tilde\delta_1^{\mathbf w}. 
\end{equation}
For the same input density \(\rho_s\), define the terminal bias
\begin{equation}
\label{eq:delta_w}
\delta^{\mathbf w}
:=
\tilde\delta_1^{\mathbf w}
=
\kappa^{\mathbf w}d^{\mathbf w}
\int_0^1 e^{\tilde\alpha^{\mathbf w}(1-r)}\,dr.
\end{equation}
Consequently, we have
\begin{equation}
\label{eq:operator_same_input_bound}
\mathbb W_2\!\left(
\widehat{\mathscr P}_{s,t}^{\mathbf w}\rho_s,\;
\mathscr P_{s,t}^{\mathbf w}\rho_s
\right)
\le
\delta^{\mathbf w}.
\end{equation}
\end{theorem}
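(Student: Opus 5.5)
The plan is to reduce everything to Theorem~\ref{thm:wasserstein_contraction_bound}, applied in the auxiliary time \(\tau\in[0,1]\). The learned field \(\widehat{\mathbf b}_{\psi,\tau}^{\mathbf w}\) plays the role of the nominal contracting dynamics, and the exact field \(\mathbf b_\tau^{\mathbf w}=\widehat{\mathbf b}_{\psi,\tau}^{\mathbf w}+\mathbf d_\tau^{\mathbf w}\) plays the role of the perturbed dynamics. This requires two things: first, that the learned field satisfies the contraction inequality of Lemma~\ref{lem:deterministic_contraction} with a corrected rate; second, that the pointwise-to-coupling argument at the end of the proof of Theorem~\ref{thm:wasserstein_contraction_bound} transfers verbatim.

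\textbf{Step 1: corrected rate.} By \eqref{eq:fault_contraction_residual}, the left-hand side of the desired inequality equals \(2\alpha^{\mathbf w}\mathbf M_\vartheta^{\mathbf w}+\mathcal R_{\mathrm{ctr}}^{\mathbf w}\). Since \(\mathcal R_{\mathrm{ctr}}^{\mathbf w}\) is symmetric, it satisfies \(\mathcal R_{\mathrm{ctr}}^{\mathbf w}\preceq[\mathcal R_{\mathrm{ctr}}^{\mathbf w}]_+\), where \([\cdot]_+\) is the spectral positive part. The largest eigenvalue of \([\mathcal R]_+\) is at most \(\|[\mathcal R]_+\|_F\), and \(\|[\mathcal R]_+\|_F\le\varepsilon_{\mathrm{ctr}}^{\mathbf w}\) by \eqref{eq:eps_ctr}. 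Combining this with \(\Ib\preceq \mathbf M_\vartheta^{\mathbf w}/\underline m^{\mathbf w}\) gives
\[
\mathcal R_{\mathrm{ctr}}^{\mathbf w}\preceq \varepsilon_{\mathrm{ctr}}^{\mathbf w}\Ib\preceq \frac{\varepsilon_{\mathrm{ctr}}^{\mathbf w}}{\underline m^{\mathbf w}}\mathbf M_\vartheta^{\mathbf w}=2\bigl(\tilde\alpha^{\mathbf w}-\alpha^{\mathbf w}\bigr)\mathbf M_\vartheta^{\mathbf w},
\]
which yields the stated inequality with \(\tilde\alpha^{\mathbf w}\) from \eqref{eq:alpha_tilde}.

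A subtlety arises here. The loss only controls the residual in expectation, whereas the bound needs it everywhere. The sup in \eqref{eq:eps_ctr} is taken over all of \(\mathscr X\times[0,1]\), so it holds by definition. The full-support and continuity hypotheses are what justify passing from an a.e.\ statement, which is what training gives, to a pointwise one. I will note this but not rely on it.

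\textbf{Step 2: invoke the contraction bound.} The learned flow \(\widehat\Phi_{\psi,0,\tau}^{\mathbf b^{\mathbf w}}\) is generated by \(\widehat{\mathbf b}_{\psi,\tau}^{\mathbf w}\) via the defining identity \(\partial_{\tau_1}\widehat\Phi=\widehat{\mathbf b}\circ\widehat\Phi\). The \(C^1\) assumption in \(\tau\) makes this legitimate. The exact flow \(\Phi_{0,\tau}^{\mathbf b^{\mathbf w}}\) solves the same ODE plus the perturbation \(\mathbf d_\tau^{\mathbf w}\) with \(\|\mathbf d_\tau^{\mathbf w}\|\le\overline d^{\mathbf w}\). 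Redoing the virtual-displacement computation from the proof of Theorem~\ref{thm:wasserstein_contraction_bound} on the interpolating path, with \(\mathbf M_\vartheta^{\mathbf w}\) and rate \(\tilde\alpha^{\mathbf w}\), gives the trajectory bound
\[
\|\widehat\Phi_\tau(\xb)-\Phi_{0,\tau}(\yb)\|\le\kappa^{\mathbf w}e^{\tilde\alpha^{\mathbf w}\tau}\|\xb-\yb\|+\tilde\delta_\tau^{\mathbf w}.
\]
Taking any coupling \(\pi_0\in\Pi(\mu_0,\nu_0)\), pushing it forward under \((\widehat\Phi_\tau,\Phi_{0,\tau})\), applying Minkowski in \(L^2\), and taking the infimum over \(\pi_0\) yields \eqref{eq:flow_map_bound}.

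\textbf{Step 3: specialize.} Setting \(\tau=1\) and using the definitions \(\widehat{\mathscr P}_{s,t}^{\mathbf w}\rho=(\widehat\Phi_{\psi,0,1})_\#\rho\) and \(\mathscr P_{s,t}^{\mathbf w}\eta=(\Phi_{0,1})_\#\eta\) gives \eqref{eq:operator_bound}. Setting \(\eta_s=\rho_s\) makes the first term vanish, since \(\mathbb W_2(\rho_s,\rho_s)=0\), leaving \(\tilde\delta_1^{\mathbf w}=\delta^{\mathbf w}\). This is \eqref{eq:operator_same_input_bound}.

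The main obstacle is Step 1's eigenvalue comparison together with the sign convention for \([\cdot]_+\), whose wording in the paper is loose. I would make explicit that \([\mathcal R]_+=\sum_i\max(\lambda_i,0)v_iv_i^\top\) for symmetric \(\mathcal R\), so that \(\mathcal R\preceq[\mathcal R]_+\preceq\|[\mathcal R]_+\|_2\Ib\le\|[\mathcal R]_+\|_F\Ib\). Everything else is a direct reuse of the earlier proof.
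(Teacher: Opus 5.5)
Your proposal is correct and follows the paper's proof. You obtain the corrected rate from the residual bound, spelling out the comparison \(\mathcal R_{\mathrm{ctr}}^{\mathbf w}\preceq[\mathcal R_{\mathrm{ctr}}^{\mathbf w}]_+\preceq\varepsilon_{\mathrm{ctr}}^{\mathbf w}\Ib\preceq(\varepsilon_{\mathrm{ctr}}^{\mathbf w}/\underline m^{\mathbf w})\mathbf M_\vartheta^{\mathbf w}\) that the paper leaves as ``the standard quadratic-form argument''; you then apply Theorem~\ref{thm:wasserstein_contraction_bound} with \(\widehat{\mathbf b}_{\psi,\tau}^{\mathbf w}\) as nominal field and \(\mathbf d_\tau^{\mathbf w}\) as perturbation, and specialize to \(\tau=1\) and \(\eta_s=\rho_s\), and your side remark is also right that the full-support and continuity hypotheses are needed only for Corollary~\ref{cor:exact_ctr} (passing from \(\mathcal L_{\mathrm{ctr}}=0\) to \(\varepsilon_{\mathrm{ctr}}^{\mathbf w}=0\)), not for the bound itself.
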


\begin{proof}
The definition~\eqref{eq:fault_contraction_residual} and the Frobenius bound~\eqref{eq:eps_ctr} on \([\mathcal R_{\mathrm{ctr}}^{\mathbf w}]_+\) imply the Jacobian analogue of Lemma~\ref{lem:deterministic_contraction} along the learned auxiliary flow with rate
\(\tilde\alpha^{\mathbf w}\) in~\eqref{eq:alpha_tilde}, as in the standard quadratic-form argument~\cite{Lohmiller1998,tsukamoto2021contraction}.
Applying Theorem~\ref{thm:wasserstein_contraction_bound} on \(\tau\in[0,1]\) with
nominal field \(\widehat{\mathbf b}_{\psi,\tau}^{\mathbf w}\), perturbation
\(\mathbf d_\tau^{\mathbf w}\), homotopy~\eqref{eq:particular_solutions_virtual_system},
rate \(\tilde\alpha^{\mathbf w}\), and \(\kappa^{\mathbf w}\), \(\overline{d}^{\mathbf w}\)
as above yields~\eqref{eq:flow_map_bound}.
The operator bound~\eqref{eq:operator_bound} follows at \(\tau=1\), and~\eqref{eq:delta_w}--\eqref{eq:operator_same_input_bound}
specialize~\eqref{eq:operator_bound} to \(\eta_s=\rho_s\).
\end{proof}

Theorem~\ref{thm:approx_ctr} gives the main long-horizon \(\mathbb W_2\) error bound for the learned operator family under contraction regularization.
\begin{corollary}[Exact contraction regularization]
\label{cor:exact_ctr}
If \(\mathcal L_{\mathrm{ctr}}(\psi,\vartheta,\alpha)=0\) under the hypotheses of
Theorem~\ref{thm:approx_ctr}, then \(\varepsilon_{\mathrm{ctr}}^{\mathbf w}=0\),
\(\tilde\alpha^{\mathbf w}=\alpha^{\mathbf w}\), and
\(\tilde\delta_\tau^{\mathbf w}\) coincides with the bias term obtained by
replacing \(\tilde\alpha^{\mathbf w}\) with \(\alpha^{\mathbf w}\) throughout
Theorem~\ref{thm:approx_ctr}.
\end{corollary}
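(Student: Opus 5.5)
The corollary is a direct specialization of Theorem~\ref{thm:approx_ctr}. The plan is to show that a vanishing contraction loss forces the positive-semidefinite part of the residual to vanish \emph{pointwise}, not just on average. After that, the definitions \eqref{eq:eps_ctr} and \eqref{eq:alpha_tilde} do the rest.

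\emph{Step 1: from the loss to pointwise vanishing.} The loss $\mathcal L_{\mathrm{ctr}}(\psi,\vartheta,\alpha)$ is the expectation of the nonnegative quantity $\|[\mathcal R_{\mathrm{ctr}}^{\mathbf w}(\xb,\tau)]_+\|_F^2$. The expectation is taken over $\mathbf w$, $\tau$, and $\xb$ drawn from the training marginals $\rho_\tau^{\mathbf w}$. If this expectation is zero, then $[\mathcal R_{\mathrm{ctr}}^{\mathbf w}]_+=0$ almost everywhere with respect to the sampling measure.

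\emph{Step 2: upgrade to everywhere, using the hypotheses of Theorem~\ref{thm:approx_ctr}.} Two hypotheses are needed here. First, $\mathcal R_{\mathrm{ctr}}^{\mathbf w}$ is continuous. Second, the projection onto the positive-semidefinite cone is continuous: it is $1$-Lipschitz in Frobenius norm, being a metric projection onto a closed convex set. Hence $(\xb,\tau)\mapsto\|[\mathcal R_{\mathrm{ctr}}^{\mathbf w}]_+\|_F$ is continuous. Now use full support. Since $\rho_\tau^{\mathbf w}$ has full support on $\mathscr X$, and the $\tau$-sampling density is assumed to have full support on $[0,1]$, a continuous nonnegative function vanishing almost everywhere must vanish everywhere on $\mathscr X\times[0,1]$. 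Indeed, if it were positive at some point, it would be positive on an open neighborhood, and that neighborhood has positive measure, contradicting Step 1.

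\emph{Step 3: conclude.} Step 2 gives $\varepsilon_{\mathrm{ctr}}^{\mathbf w}=0$ in \eqref{eq:eps_ctr}, so \eqref{eq:alpha_tilde} gives $\tilde\alpha^{\mathbf w}=\alpha^{\mathbf w}$. Substituting into the definition of $\tilde\delta_\tau^{\mathbf w}$, and into \eqref{eq:flow_map_bound}--\eqref{eq:operator_same_input_bound}, shows that the bias term is exactly $\kappa^{\mathbf w}\overline d^{\mathbf w}\int_0^\tau e^{\alpha^{\mathbf w}(\tau-r)}\,dr$. This is the claimed coincidence.

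\emph{Expected obstacle.} The only delicate point is Step 2. It requires the $\mathbf w$- and $\tau$-marginals used in $\mathcal L_{\mathrm{ctr}}$ to charge every open set. For $\mathbf w$ this holds automatically, because $\mathscr F$ is finite and each $\mathbf w$ is fixed with positive weight. For $\tau$, I would state the full-support requirement on the sampling distribution explicitly, analogous to the full-support assumption on $\omega$. If that requirement is unavailable, the fallback is to read $\mathcal L_{\mathrm{ctr}}=0$ as the pointwise condition over the sampling support, and to restrict the supremum in \eqref{eq:eps_ctr} to that support.
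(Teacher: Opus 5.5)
Your argument is correct and matches the paper's intended route. The paper states the corollary without a separate proof, but the continuity of $\mathcal R_{\mathrm{ctr}}^{\mathbf w}$ and the full support of $\rho_\tau^{\mathbf w}$ in Theorem~\ref{thm:approx_ctr} are there precisely to upgrade $\mathcal L_{\mathrm{ctr}}=0$ from almost-everywhere to everywhere vanishing of $[\mathcal R_{\mathrm{ctr}}^{\mathbf w}]_+$, after which \eqref{eq:eps_ctr} and \eqref{eq:alpha_tilde} give the claim. Your remark that the $\tau$-sampling law, and the weights on $\mathscr F$, must also charge every open set is a legitimate hypothesis that the paper leaves implicit.
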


\section{Fault Inference and Recovery Control}
We extend the fault-indexed density-transport framework of Sections~\ref{sec:operator_learning} and~\ref{sec:contraction_learning} to fault identification and recovery via density tracking control. The learned operator library $\mathfrak P_{\mathrm{lib}} := \{\widehat{\mathscr P}_{s,t}^{\mathbf w}:\mathbf w\in\mathscr F\}$ is computed offline and queried online for both inference and recovery.

\subsection{Fault Inference}
Recall that \(\mathscr F=\{\mathbf w_{0},\ldots,\mathbf w_{N_f}\}\) is the fault library and let
\(
\widehat\rho_{k|k}^{(j)} = \frac{1}{N_p}\sum_{i=1}^{N_p}\delta_{\xb_{k|k}^{i,(j)}}
\)
denote the posterior under hypothesis \(\mathbf w_{j}\) for $j \in\left\{0, \ldots, N_f\right\}$. For each  $\mathbf{w}_j$, prediction uses the corresponding fault-indexed flow map,
\begin{equation*}
\label{eq:online_bank_prediction}
\xb_{k+1|k}^{i,(j)}
=
\Phi_{t_k,t_{k+1}}^{\mathbf F_{\mathbf w_{j}}}
\!\bigl(\xb_{k|k}^{i,(j)}\bigr).
\end{equation*}
At time \(t_k\), let \(\CMcal Y_k:=\{\yb_0,\ldots,\yb_k\}\) such that for each candidate \(\mathbf w\), the one-step predicted density and predictive likelihood are with observations \(\yb_k=h(\xb_k)+\boldsymbol\epsilon_k\), where \(h:\mathscr X\to\R^{n_y}\) is the observation map and \(\boldsymbol\epsilon_k\sim p_\epsilon\) is i.i.d.\ measurement noise with known density \(p_\epsilon\) on \(\R^{n_y}\). The conditional measurement density is \(p(\yb\mid\xb)=p_{\boldsymbol\epsilon}(\yb-h(\xb))\). Denote the current filtered density by \(\widehat\rho_{k|k}^{(j)}\in\mathcal D(\mathscr X)\). The predictive likelihood is then
\begin{align}
\label{eq:fault_density_prediction}
\widehat{\rho}_{k+1 \mid k}^{(j)} &:=\widehat{\mathscr{P}}_{t_k, t_{k+1}}^{\mathbf{w}_j} \widehat{\rho}_{k \mid k}^{(j)},
\\
\label{eq:fault_predictive_likelihood}
\mathcal L^{(j)}_{k+1}
&:=
\int_{\mathscr X}
p(\yb_{k+1}\mid\xb)\,
\widehat\rho_{k+1|k}^{(j)}(\xb)\,d\xb.
\end{align}
If \(p_\epsilon = \mathcal N (0, \mathbf R) \), i.e., \(\boldsymbol \epsilon_k \sim \mathcal N (0, \mathbf R)\), one has
\begin{equation}
\mathcal L_{k+1}^{(j)} \propto \frac1{N_p}\sum_{i=1}^{N_p} \exp\!\left(-\frac12 \|h(\xb_{k+1|k}^{i,(j)})-\yb_{k+1}\|_{\mathbf R^{-1}}^2 \right).
\end{equation}
Instead of selecting a discrete library, we perform iterative fault inference over a continuous parameter set \(\mathcal W\) using the maximum-likelihood estimator (MLE)
\begin{equation}
\label{eq:online_continuous_fault_mle}
\widehat{\mathbf w}_{\mathrm{MLE},k}
\in
\argmin_{\mathbf w \in \mathcal{W}}
\sum_{\ell=0}^{k}
\frac12\|h(\xb_\ell(\mathbf w))-\yb_\ell\|_{\mathbf R^{-1}}^2
\end{equation}
subject to the nonlinear dynamics
\begin{equation*}
\label{eq:online_continuous_fault_constraint}
\xb_{\ell+1}(\mathbf w)
=
\Phi_{t_\ell,t_{\ell+1}}^{\mathbf F_{\mathbf w}}
\!\left(\xb_\ell(\mathbf w)\right),
\qquad
\xb_0(\mathbf w)=\xb_0.
\end{equation*}
Since \eqref{eq:online_continuous_fault_mle} optimizes over \(\mathbf w\in\mathcal W\), the estimate \(\widehat{\mathbf w}_{\mathrm{MLE},k}\) is not restricted to \(\mathscr F\). Thus, online fault inference can be viewed as a continuous fault-fitting problem over \(\mathcal W\), with density reachability estimated by evaluating the learned fault-indexed PF-operator family at \(\widehat{\mathbf w}_{\mathrm{MLE},k}\).

With positive prior mass on the true fault and predictive likelihoods separating the hypotheses (Definition~\ref{def:density_reachability}), the posterior is expected to concentrate on the best matching operator, consistent with standard Bayesian concentration~\cite{ghosal2017fundamentals}. The predicted family \(\{\widehat\rho_{k+1|k}^{(j)}\}_{j=0}^{N_f}\) is the
learned analog of~\eqref{eq:reachable_density_family}, so fault detection, identifiability, and recovery are evaluated in the density space through density reachability.

\subsection{Recovery via Contractive GMM control}
Let $\hat{\rho}_{k \mid k}:=\hat{\rho}_{k \mid k}^{\widehat{\mathrm{w}}_{\mathrm{MLE}, k}}$ where $\widehat{\mathrm{w}}_{\mathrm{MLE}, k} \in \mathcal{W}$. Once the posterior concentrates, we propagate the current filtered density over a finite recovery horizon \(\{t_k,\ldots,t_{k+N_r}\}\), where \(N_r\in\mathbb N\), and \(\mathbb{N}\) is the set of positive integers, under both the inferred fault operator and the nominal operator:
\begin{align}
\label{eq:fault_recovery_density}
\widehat\rho_{k+\ell|k}^{\mathrm{fault}}
&:=
\widehat{\mathscr P}_{t_k,t_{k+\ell}}^{\widehat{\mathbf w}_{\mathrm{MLE},k}}
\widehat\rho_{k|k},
\\
\label{eq:nominal_recovery_density}
\widehat\rho_{k+\ell|k}^{\mathrm{nom}}
&:=
\widehat{\mathscr P}_{t_k,t_{k+\ell}}^{\mathbf w_0}
\widehat\rho_{k|k},
\qquad
\ell=0,\ldots,N_r.
\end{align}
Here~\eqref{eq:fault_recovery_density} is the estimated fault reachable density and~\eqref{eq:nominal_recovery_density} the nominal recovery target.

\begin{lemma}[Recovery target surrogate bound]
\label{lem:recovery_surrogate}
Assume the true active fault on \([t_k,t_{k+N_r}]\) is \(\mathbf w_\star=\widehat{\mathbf w}_{\mathrm{MLE}, k}\), let \(\rho_{k|k}\) denote the true filtered density, and define the true terminal fault reachable density
\(
\rho_{k+N_r}^{\mathrm{fault}}
:=
\mathscr P_{t_k,t_{k+N_r}}^{\mathbf w_\star}\rho_{k|k}.
\)
Suppose the Gaussian Mixture Model (GMM) approximation of the estimated terminal fault reachable density satisfies, for some \(\varepsilon_f>0\), we have
\(
\mathbb W_2\!\left(
\widehat\rho_{k+N_r|k}^{\mathrm{fault}},
\rho_{\mathrm{GMM}}^{\mathrm{fault}}
\right)\le\varepsilon_f\).
Consequently, one has
\begin{equation*}
\label{eq:recovery_surrogate_bound}
\mathbb W_2\!\left(
\rho_{k+N_r}^{\mathrm{fault}},
\rho_{\mathrm{GMM}}^{\mathrm{fault}}
\right)
\le
\kappa^{\mathbf w_\star}e^{\alpha^{\mathbf w_\star}}
\mathbb W_2(\rho_{k|k},\widehat\rho_{k|k})
+\delta_1^{\mathbf w_\star}
+\varepsilon_f.
\end{equation*}
\end{lemma}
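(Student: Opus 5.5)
The bound follows from the triangle inequality for \(\mathbb W_2\) on \(\CMcal P_2(\mathscr X)\), using the estimated terminal fault density \(\widehat\rho_{k+N_r|k}^{\mathrm{fault}}\) as the intermediate point:
\[
\mathbb W_2\!\left(\rho_{k+N_r}^{\mathrm{fault}},\rho_{\mathrm{GMM}}^{\mathrm{fault}}\right)
\le
\mathbb W_2\!\left(\rho_{k+N_r}^{\mathrm{fault}},\widehat\rho_{k+N_r|k}^{\mathrm{fault}}\right)
+\mathbb W_2\!\left(\widehat\rho_{k+N_r|k}^{\mathrm{fault}},\rho_{\mathrm{GMM}}^{\mathrm{fault}}\right).
\]
The second term is at most \(\varepsilon_f\) by hypothesis, so all the work is in the first term.

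For the first term, I use the definitions~\eqref{eq:fault_recovery_density} with \(\ell=N_r\) and \(\mathbf w_\star=\widehat{\mathbf w}_{\mathrm{MLE},k}\). This gives \(\widehat\rho_{k+N_r|k}^{\mathrm{fault}}=\widehat{\mathscr P}_{t_k,t_{k+N_r}}^{\mathbf w_\star}\widehat\rho_{k|k}\) and \(\rho_{k+N_r}^{\mathrm{fault}}=\mathscr P_{t_k,t_{k+N_r}}^{\mathbf w_\star}\rho_{k|k}\). I then invoke the mixed-input operator bound~\eqref{eq:operator_bound} of Theorem~\ref{thm:approx_ctr} with \(s=t_k\), \(t=t_{k+N_r}\), learned input \(\rho_s=\widehat\rho_{k|k}\) and true input \(\eta_s=\rho_{k|k}\), using the symmetry of \(\mathbb W_2\). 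This yields
\[
\mathbb W_2\!\left(\widehat{\mathscr P}^{\mathbf w_\star}\widehat\rho_{k|k},\mathscr P^{\mathbf w_\star}\rho_{k|k}\right)\le \kappa^{\mathbf w_\star}e^{\tilde\alpha^{\mathbf w_\star}}\mathbb W_2(\widehat\rho_{k|k},\rho_{k|k})+\tilde\delta_1^{\mathbf w_\star}.
\]
To match the statement, which is written with \(\alpha^{\mathbf w_\star}\) and \(\delta_1^{\mathbf w_\star}\), I pass to the exact-contraction case of Corollary~\ref{cor:exact_ctr}. There \(\tilde\alpha=\alpha\) and \(\tilde\delta_1=\delta_1\), so the bound takes exactly the stated form. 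Adding the two terms gives the claim.

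The main obstacle is checking that the hypotheses of Theorem~\ref{thm:approx_ctr} hold for this pair of inputs. First, \(\widehat\rho_{k|k}\) is a particle (empirical) measure rather than a density. This is harmless because~\eqref{eq:flow_map_bound} is stated for general \(\mu_0,\nu_0\in\mathcal P_2(\mathscr X)\), so I would work at the measure level. Second, \(\mathbf w_\star\) may lie in \(\mathcal W\setminus\mathscr F\), whereas Theorem~\ref{thm:approx_ctr} fixes \(\mathbf w\in\mathscr F\). I would take the contraction certificate and perturbation bound at \(\mathbf w_\star\) as assumed, reading \(\kappa^{\mathbf w_\star}\), \(\alpha^{\mathbf w_\star}\) and \(\delta_1^{\mathbf w_\star}\) as those constants evaluated at the inferred parameter. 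If the contraction residual does not vanish, the same argument goes through with \(\tilde\alpha^{\mathbf w_\star}\) and \(\tilde\delta_1^{\mathbf w_\star}\) in place of \(\alpha^{\mathbf w_\star}\) and \(\delta_1^{\mathbf w_\star}\).
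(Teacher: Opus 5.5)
Your proof is correct and follows the paper's route exactly: the triangle inequality through \(\widehat\rho_{k+N_r|k}^{\mathrm{fault}}\), then the mixed-input bound~\eqref{eq:operator_bound} at \(\mathbf w=\mathbf w_\star\). Your input assignment (learned operator on \(\widehat\rho_{k|k}\), exact operator on \(\rho_{k|k}\)) is the right one---the paper's proof literally takes \(\rho_s=\rho_{k|k}\), \(\eta_s=\widehat\rho_{k|k}\), which bounds the other cross term---and your caveats that the stated constants \(\alpha^{\mathbf w_\star},\delta_1^{\mathbf w_\star}\) require \(\varepsilon_{\mathrm{ctr}}^{\mathbf w_\star}=0\) (otherwise \(\tilde\alpha^{\mathbf w_\star},\tilde\delta_1^{\mathbf w_\star}\)) and that \(\mathbf w_\star\) may lie outside \(\mathscr F\) address points the paper passes over without comment.
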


\begin{proof}
By the triangle inequality, one has
\[
\mathbb W_2\!\left(
\rho_{k+N_r}^{\mathrm{fault}},
\rho_{\mathrm{GMM}}^{\mathrm{fault}}
\right)
\le
\mathbb W_2\!\left(
\rho_{k+N_r}^{\mathrm{fault}},
\widehat\rho_{k+N_r|k}^{\mathrm{fault}}
\right)
+\varepsilon_f.
\]
Applying~\eqref{eq:operator_bound} with \(\mathbf w=\mathbf w_\star\), \(\rho_s=\rho_{k|k}\), and \(\eta_s=\widehat\rho_{k|k}\) gives the stated bound.
\end{proof}

To synthesize recovery control, let us approximate the terminal fault and nominal reachable densities by matched-weight Gaussian mixtures with \(M\in\mathbb N\) components, $\widehat\rho_{k+N_r|k}^{\mathrm{fault}}(\xb) \approx \sum_{i=1}^M\beta_i\,\CMcal N(\xb;m_i^f,\Sigma_i^f)$ and $ \widehat\rho_{k+N_r|k}^{\mathrm{nom}}(\xb) \approx \sum_{i=1}^M\beta_i\,\CMcal N(\xb;m_i^n,\Sigma_i^n)$. Matched weights preserve component correspondence, so recovery is posed componentwise from \((m_i^f,\Sigma_i^f)\) to \((m_i^n,\Sigma_i^n)\). Let
\begin{equation*}
\label{eq:gmm_responsibility_recovery}
\gamma_i(\xb_k)
:=
\frac{
\beta_i\,\varphi(\xb_k;m_i^f,\Sigma_i^f)
}{
\sum_{r=1}^{M}\beta_r\,\varphi(\xb_k;m_r^f,\Sigma_r^f)
}
\end{equation*}
denote the fault-component responsibility at the current state \(\xb_k\). For each retained component \(i\in\mathcal I_k\), linearizing about the faulted component mean yields the local model
\begin{equation*}
\label{eq:gmm_ccm_local_model}
\xb_{\ell+1}
=
A_\ell^{(i)}\xb_\ell+B_\ell^{(i)}\delta\ub_\ell+\mathbf{c}_\ell^{(i)},
\qquad
\ell=0,\dots,N_r-1,
\end{equation*}
with affine law $\delta\ub_\ell^{(i)} = \boldsymbol{\nu}_\ell^{(i)} + K_\ell^{(i)}\bigl(\xb_\ell-\mu_\ell^{(i)}\bigr)$. The induced mean--covariance recursion is
\begin{equation*}
\label{eq:gmm_ccm_moment_recursion}
\begin{aligned}
\mu_{\ell+1}^{(i)}
&=
A_\ell^{(i)}\mu_\ell^{(i)}
+
B_\ell^{(i)}\boldsymbol{\nu}_\ell^{(i)}
+
\mathbf{c}_\ell^{(i)},
\\
\Sigma_{\ell+1}^{(i)}
&=
\bigl(A_\ell^{(i)}+B_\ell^{(i)}K_\ell^{(i)}\bigr)\Sigma_\ell^{(i)}
\bigl(A_\ell^{(i)}+B_\ell^{(i)}K_\ell^{(i)}\bigr)^\top
+
W_\ell^{(i)},
\end{aligned}
\end{equation*}
initialized at $\mu_0^{(i)}=m_i^f$ and $\Sigma_0^{(i)}=\Sigma_i^f$ where \(W_\ell^{(i)}\succeq0\) is the residual covariance of the propagated fault tube. 

Define $Q_{\ell} \succeq 0$, $R_{\ell}\succ 0$, and $P_{N_r}=\lambda_T Q_{N_r-1}$. We can compute a contraction metric \(P_\ell\) backward along the nominal linearization by
\begin{equation}
\label{eq:gmm_ccm_riccati}
\begin{split}
P_\ell
=
Q_\ell &
+
(A_\ell^{\mathrm{nom}})^\top P_{\ell+1}A_\ell^{\mathrm{nom}}
- \\
&(A_\ell^{\mathrm{nom}})^\top P_{\ell+1}B_\ell^{\mathrm{nom}}
S_\ell^{-1}
(B_\ell^{\mathrm{nom}})^\top P_{\ell+1}A_\ell^{\mathrm{nom}},
\end{split}
\end{equation}
where $S_\ell:=R_\ell+(B_\ell^{\mathrm{nom}})^\top P_{\ell+1}B_\ell^{\mathrm{nom}}$. It can be shown by application of the converse contraction theorem applied to control contraction metrics~\cite{manchester2017,giesl2022review} that~\eqref{eq:gmm_ccm_riccati} defines a valid contraction metric. 

For each \(i\in\mathcal I_k\), we solve
\begin{equation*}
\label{eq:gmm_ccm_component_ocp}
\min_{\{\boldsymbol{\nu}_\ell^{(i)},K_\ell^{(i)}\}_{\ell=0}^{N_r}}
J^{(i)}
:=
\sum_{\ell=0}^{N_r}
\Bigl[
\omega_\ell \,\ell_{\mathrm{trk},\ell}^{(i)}
+
\rho_\nu\|\boldsymbol{\nu}_\ell^{(i)}\|_2^2
+
\rho_K\|K_\ell^{(i)}\|_F^2
\Bigr],
\end{equation*}
where
\begin{equation*}
\begin{aligned}
\ell_{\mathrm{trk},\ell}^{(i)}:= q_m&\|\mu_{\ell+1}^{(i)}\!-m_i^n\|_2^2
+
q_\Sigma\|\Sigma_{\ell+1}^{(i)}\!-\Sigma_i^n\|_F^2
\\ 
&+\lambda_m\|\mu_{\ell+1}^{(i)}\!-\xb_{k+\ell+1}^{\mathrm{nom}}\|_{P_{\ell+1}}^2
+
\lambda_\Sigma\operatorname{Tr}\!\bigl(P_{\ell+1}\Sigma_{\ell+1}^{(i)}\bigr).
\end{aligned}
\end{equation*}
Here $\omega_\ell \geq 0$ is a stage weight, and $q_m, q_{\Sigma}, \lambda_m, \lambda_{\Sigma}, \rho_\nu, \rho_K > 0$ are the weights for mean matching, covariance matching, contraction-metric regularization, and control effort. The controls are then the blended normalized responsibilities,
\begin{equation*}
\label{eq:gmm_ccm_first_step}
\delta\ub_k^\star
=
\sum_{i\in\mathcal I_k}\bar\gamma_i(\xb_k)
\Bigl(
\boldsymbol{\nu}_0^{(i)}
+
K_0^{(i)}(\xb_k-m_i^f)
\Bigr),
\end{equation*}
with $\bar\gamma_i(\xb_k)
:=
\frac{\gamma_i(\xb_k)}{\sum_{r\in\mathcal I_k}\gamma_r(\xb_k)}$. Lemma~\ref{lem:recovery_surrogate} bounds the gap between the true and GMM-approximated fault reachable densities, justifying the use of the learned target. The recovery correction is applied iteratively to the baseline controller as $\ub_k=\ub_{\mathrm{cl},k}+\ub_{\mathrm{rec},k}$ where $\ub_{\mathrm{rec},k}:= \delta\ub_k^\star$.

\section{Simulations}
We validate the proposed contractive Perron--Frobenius reachability FDI and recovery framework on a 10-state spacecraft attitude-control benchmark with four tetrahedrally mounted reaction wheels, where $x=\left[\theta^{\top}, \omega^{\top}, \omega_w^{\top}\right]^{\top}$ collects attitude, body rates, and wheel speeds. The spacecraft parameters follow~\cite{lee2017geometric} with $I=\operatorname{diag}(1.0,1.0,0.8)\,\mathrm{kg\,m}^2$ and $J_w=0.01\,\mathrm{kg\,m}^2$. The nominal controller is a saturated PD tracker with $K_p=\operatorname{diag}(22.5,18.0,15.0)$, $K_d=\operatorname{diag}(12.0,9.0,7.5)$, wheel-torque limit $0.14$ Nm, and sampling time $\Delta t=0.02\,\mathrm{s}$, designed to follow $\theta_d(t)= [0.05 \sin (0.2 \pi t), 0.05 \cos (0.2 \pi t),(\pi / 250) t]^{\top}$ via $u_{\text {nom }}=-K_p\left(\theta-\theta_d\right)-K_d \omega$ and $u_w=\operatorname{sat}\left(A^{\dagger} u_{\text {nom }}, 0.14\right)$. Faults are modeled as constant wheel loss-of-effectiveness coefficients $\alpha \in[0,1]^4$. The filter uses full-state noisy measurements, and recovery combines the learned fault-indexed PF operator, continuous OOD fault refinement, and contractive GMM replanning under matched nonlinear rollout-noise realizations.

For the OOD case $\alpha_{\star}=[0.15,0.4,0.2,0.25]^{\top}$ over a $10\,\mathrm{s}$ horizon (500 steps), the inferred fault is $\widehat{\alpha}=[0.14972,0.40174,0.19999,0.24871]^{\top}$, with $\ell_2$ error $2.19 \times 10^{-3}$. Over 25 uniformly sampled OOD faults in $[0,1]^4$, the continuous estimator attains mean fault error $7.34\times 10^{-3}$ in $\ell_2$, which indicates accurate generalization beyond the discrete training library. The empirical terminal reachable-density separation is \(4.38\times10^{-1}\) in \(\mathbb W_2\), which is consistent with the sampled rollout-restricted median bound \(6.48\) from~\eqref{eq:detect_cert} over one inference step \(\Delta t=0.02\,\mathrm{s}\). The faulted, i.e., without recovery control, and recovered trajectories are driven by the same state-dependent multiplicative noise realization on the body rates and wheel speeds, whereas the nominal trajectory remains noise-free. Using the proposed FDIR, the terminal full-state error relative to nominal drops from $1.058 \times 10^{1}$ under fault to $9.909 \times 10^{-1}$ under recovery, a $10.68 \times$ improvement, while the mean full-state error decreases from $7.478$ to $5.745 \times 10^{-1}$, a $13.02 \times$ pathwise improvement. Figure~\ref{fig:ood_fdir_performance}(b) further shows that the continuous fault-estimation error decreases to $8.18\times 10^{-2}$, and the empirical terminal operator gap \(\mathbb W_2(\widehat{\mathscr P}_{0,T}^{\mathbf w}\rho_0,\mathscr P_{0,T}^{\mathbf w}\rho_0)\approx 3.02\) remains below the sampled rollout-restricted worst-case bound \(\delta^{\mathbf w}\approx 2.26\times10^{2}\) from~\eqref{eq:delta_w}.

\begin{figure}[t]
\centering
\includegraphics[width=\columnwidth,trim=0 0 0 0,clip]{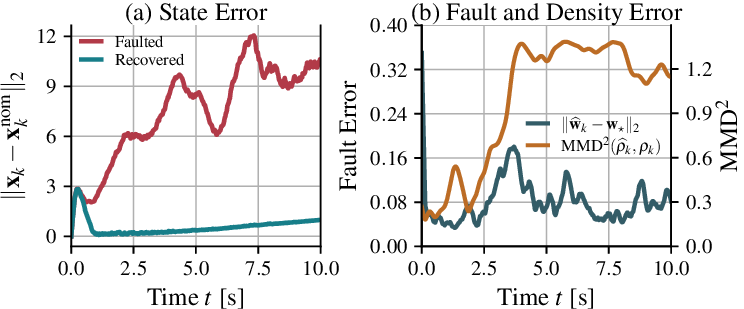}
\caption{OOD FDIR performance for $\alpha_{\star}=[0.15,0.4,0.2,0.25]^{\top}$. (a) The reduction in full-state tracking error under recovery control. (b) Continuous fault-estimation error and reachable-density discrepancy $\mathrm{MMD}^2(\widehat{\rho}_k,\rho_k)$.}   
\label{fig:ood_fdir_performance}
\end{figure} 
\section{Conclusion}
This paper presented a Perron-Frobenius (PF) operator-based framework for data-driven fault detection, identification, and recovery (FDIR) in nonlinear stochastic systems via density reachability. Using probability-flow PF operators and contraction theory, we defined fault-indexed reachable density families and established $\mathbb{W}_2$ conditions for density-based fault detectability and identifiability. We utilized flow map matching to learn the fault-indexed operators from trajectory data, and demonstrated that the observable endpoint residual directly bounds the deployed operator error in $\mathbb{W}_2$. Furthermore, we co-trained a contraction certificate that provides explicit long-horizon guarantees relative to the true fault-driven and nominal density evolutions. The learned operator library was then used online for recursive Bayesian fault inference and continuous fault parameter fitting to generalize the learned map to out-of-distribution (OOD) scenarios. To carry out the recovery control, we employed reachable-density propagation and Gaussian-mixture covariance steering. The proposed density reachability-based framework in this paper provides a systematic framework in which stochastic diagnosis, operator learning, and recovery control can be posed and analyzed together.

\bibliographystyle{IEEEtran}
\bibliography{sample} 
\end{document}